\documentclass[journal]{IEEEtran}

\usepackage{booktabs}
\usepackage{amsmath,amssymb,amsfonts}
\usepackage{algorithmic}
\usepackage{graphicx}
\usepackage{algorithm,algorithmic}
\usepackage{hyperref}
\usepackage{textcomp}
\usepackage{subcaption}
\newtheorem{theorem}{\bf Theorem}[section]
\newtheorem{lemma}{\bf Lemma}[section]
\newtheorem{remark}{\bf Remark}[section]
\newtheorem{assumption}{\bf Assumption}[section]

\newtheorem{corollary}{\bf Corollary}[section]
\newtheorem{problem}{\bf Problem}[section]
\usepackage{tikz}
\usetikzlibrary{arrows.meta}

\newcommand{\bm}[1]{\boldsymbol{#1}}
\newcommand{\mc}[1]{\mathcal{#1}}
\newcommand{\mb}[1]{\mathbb{#1}}

\newcommand{\m}[1]{\mathbf{#1}}
\def\BibTeX{{\rm B\kern-.05em{\sc i\kern-.025em b}\kern-.08em
    T\kern-.1667em\lower.7ex\hbox{E}\kern-.125emX}}
\begin{document}
\title{Adaptive stabilization of a leaderless bearing-constrained formation with disturbances}
\author{Minh Hoang Trinh$^{*}$, 
Chuong Nguyen, 
Quoc Van Tran, Tuynh Van Pham 
\thanks{$^*$Corresponding author.}
\thanks{M. H. Trinh is with East Asia University of Technology, Hanoi, Vietnam (e-mail: \texttt{minhtrinh@ieee.org}). 
C. Nguyen is with Viterbi School of Engineering – Department of Aerospace and Mechanical Engineering, University of Southern California, CA, USA  and Ghost Robotics, Philadelphia, PA, USA (e-mail: \texttt{vanchuong@usc.edu}). Q. V. Tran is with Department of Mechatronics, School of Mechanical Engineering, Hanoi University of Science and Technology, Hanoi, Vietnam (e-mail: \texttt{quoc.tranvan@hust.edu.vn}). T. V. Pham is with Department of Automation Engineering, School of Electrical and Electronic Engineering, Hanoi University of Science and Technology, Hanoi, Vietnam (e-mail: \texttt{tuynh.phamvan@hust.edu.vn}
)
}
}

\maketitle

\begin{abstract}
In this paper, we consider the problem of regulating and maintaining a target formation characterized by a set of bidirectional bearing constraints under disturbances. The agents in the formation are modeled by single integrators with bounded continuous disturbances of which the upper bound is unavailable for the control design. Due to the time-varying disturbances, the target formation is time-varying. We propose adaptive sliding mode control laws to uniformly globally asymptotically stabilizes the moving target formation and reject the matched disturbances. In addition, to alleviate chattering phenomena from sliding mode control, smooth adaptive control laws are then designed to guarantee uniform global boundedness of the desired  formation. Finally, simulation results are given to support the analysis.
\end{abstract}

\begin{IEEEkeywords}
multi-agent systems, vision-based control, sliding-mode control, chattering alleviation
\end{IEEEkeywords}

\section{Introduction}
\label{sec:1}
\IEEEPARstart{O}{ver} the past decade, formation control research has attracted significant interest from the robotics and control systems communities \cite{anderson2008rigid, oh2015survey}. Formations of unmanned robots have been developed for both civilian and military applications, including truck platooning, drone squads for highway monitoring, error checking in solar panel fields, precision agriculture, search and rescue, unmanned underwater vehicles for seabed mapping, and satellite formations for positioning and remote sensing. Moreover, research progresses in formation control are often applicable to its dual problem—the sensor network localization problem \cite{Eren2003sensor,Zhao2016aut}.

Let a group of autonomous agents (AGVs, UAVs, UUVs, satellites,...) be arbitrarily distributed as an initial configuration in the space. Formation control focuses on designing control algorithms to move a formation from the initial configuration along a continuous trajectory to a target configuration, which is specified by a set of geometric constraints. Each agent in the group can be fully controlled by its onboard controller, and the formation acquisition task is usually performed in an ambient space that is significantly larger in comparison to agents' size. As a result, in a canonical formulation \cite{oh2015survey}, the agents can be modeled as single-integrator dynamics, with the control input being its velocity. The agent's internal controller then determines actual actuator signals to realize the commanded velocity.

An important requirement in formation control is that the control law must be decentralized or distributed. To meet this requirement, each agent is considered as an independent system capable of measuring and/or communicating (via wireless channels) certain geometric variables related to neighboring agents. These variables may include global or relative positions, inter-agent distances, offset angles, and direction vectors. Based on the information accessible to each agent, formation control strategies are categorized into position-, displacement-, distance-, bearing-, and angle-based methods \cite{oh2015survey}. Generally, the complexity of the formation control problem is inversely proportional to the amount of information each agent can access, measure, or exchange.

Recently, formation control algorithms based on bearing vectors have gained increasing attention \cite{Su2026bearing}. The concept of control and localization using bearing measurements is inspired by the vision-based navigation observed in animal behavior \cite{Montijano2016,tron2016distributed}. Bearing vectors can be acquired from onboard cameras, which provide data on the relative direction between agents. Compared to other approaches, bearing-based control reduces the number of sensors required per agent and minimizes dependence on a global reference frame \cite{Doodeman2025leader}. Furthermore, since cameras are passive sensors, this approach is well-suited for military applications where signal emission must be limited \cite{Ye2017bearing}.

Bearing rigidity theory and bearing-only stabilization of leaderless formations of single-integrators have been studied in \cite{zhao2015bearing,Eric2014,Tran2018TCNS}. While formation acquisition is a fundamental task, it is necessary to consider disturbances and uncertainties in any practical formation \cite{Aldarmini2022}. For formations with leaders, \cite{Zhao2015translational,Nguyen2024TCyber} showed that formation control with disturbance and formation tracking are actually the same problem. Accordingly, different robust bearing-only control strategies for formation acquisition or tracking were proposed \cite{Zhao2020bearing,trinh2021robust,Bae2019leader, Trinh2021LCSS,Jacob2024}. However, existing works either assumed availability of a-priori information on the disturbance's upper bound and/or consider formations with several leaders to fix the scale of the target formation. For example, the authors in \cite{Zhao2020bearing,Li2021bearing,trinh2021robust} considered the formation tracking problem with constant leaders' velocity. Bearing-only formation under  disturbances was considered in \cite{Jacob2024}, where the upper bound of the disturbances are known. Bearing-only formation disturbance rejection and tracking with unknown disturbance's upper bound using adaptive sliding-mode control were proposed in \cite{Nguyen2024TCyber}. The work \cite{Trinh2021LCSS} proposed a finite-time bearing-only formation tracking strategy with unknown leader velocities based on the inverse of the minimum eigenvalue of local orthogonal projection matrices. However, the method can only be used with acyclic leader-follower graph topologies and under assumption that immediate leaders never take up collinearity positions. Bearing only leaderless formation stabilization was considered in \cite{Bae2022distributed,Bae2020tac}, where a robust adaptive gradient control strategy was employed to suppress the effects of the unknown disturbance and maintain the bearing error within a certain upper bound. However, the upper bound of the bearing error level in \cite{Bae2022distributed,Bae2020tac} is state-dependent and cannot be computed. The issue of actuator faults and time-delay were considered in \cite{Wu2024distributed}. It is worth mentioning recent works based on sign-elevation angle rigidity and angle rigidity theory on stabilizing a desired formation using only local bearing measurements  \cite{Jing2019angle,Chen2020angle,Chen2022maneuvering,Cheah2025finite,Garanayak2025}. The control of leaderless formation with mismatched disturbances, however, has not been considered in these works.

This paper studies a distributed control strategies to achieve a desired bearing-based formation in a leaderless multi-agent system of single-integrator dynamics, under the presence of deterministic, unknown but bounded disturbances. Two sensing scenarios are considered for each agent: (i) relative position measurements with respect to neighboring agents (bearing-based control), and (ii) bearing vector measurements only (bearing-only control). To address these challenges, we develop adaptive sliding-mode control strategies that guarantee fast convergence and robustness against disturbances \cite{Obeid2018barrier}. The main idea is to construct an adaptive mechanism that increases the control gain in response to the formation error, thereby progressively estimating the unknown upper bound of the disturbance \cite{Oliveira2016adaptive,Roy2020adaptive}. In particular, each bearing constraint specifying the desired formation is associated with an individual adaptive gain updating according to the error between the actual and desired bearing. By this way, by only seeing each others, the agents eventually move to the set of target formation. Since the proposed formation control law cannot reject mismatched disturbances, the target configuration is time-varying and disturbance-dependent. We characterize the time-varying target formation and establish uniform asymptotic stability of the target formation under the proposed control scheme. Furthermore, smooth adaptive control laws are developed to mitigate the chattering phenomenon and guarantee that the target formation remains uniformly ultimately bounded.

The rest of this paper is organized as follows. Section \ref{sec:2} presents theoretical background on bearing rigidity theory and formulates the problems. Sections \ref{sec:3} and \ref{sec:4} propose and analyse the formation control laws based on displacements and bearing vectors, respectively. Section \ref{sec:5} provides numerical simulations. Lastly, section \ref{sec:6} concludes the paper.

\section{Problem formulation}
\label{sec:2}
\subsection{Bearing rigidity theory}
Consider a set of $n$ non-collocated points in the $d$-dimensional space ($n\ge 2$, $d\ge2$) $\m{p}_i\in \mathbb{R}^d$, with $\m{p}_i \neq \m{p}_j,~\forall i \neq j,~ i, j \le n$. A framework (or a formation) in the $d$-dimensional space ($\mc{G},\m{p}$) is defined by an undirected graph $\mc{G}=(\mc{V},\mc{E})$ and a configuration $\m{p} = [\m{p}_1^\top,\ldots, \m{p}_n^\top]^\top \in \mathbb{R}^{dn}$. 

Specifically, $\mc{V}=\{1,\ldots,n\}$ is the set of $n$ vertices and $\mc{E} \subset \mc{V}\times \mc{V}$ is the set of $m$ edges. Each edge $(i,j)$ in $E$ joins two distinct vertices $i,j\in V$, and we do not distinguish between $(i,j)$ and $(j,i)$ as they denote a same edge. The neighbor set of a vertex $i$, denoted by $\mc{N}_i$, consists of all vertices $j$ that are directly joined to $i$ by edges $(i,j)\in E$.

We will also use $e_1,\ldots,e_m$ to denote the edges of $\mc{E}$ indexing in a certain order. With $e_k = (i,j) \in \mc{E}$, $k=1,\ldots,m$, the bearing vector from $\m{p}_i$ to $\m{p}_j$ is defined by $\m{g}_{ij} = \frac{\m{z}_{ij}}{\Vert \m{z}_{ij}\Vert}$, where $\m{z}_{ij} = \m{p}_j-\m{p}_i$ is the displacement vector between $\m{p}_i$ and $\m{p}_j$. 

Choosing an arbitrarily orientation and indexing of the edges in $\mc{E}$, we can define a corresponding incidence matrix $\m{H}=[h_{ki}] \in \mb{R}^{m \times n}$ of the graph as follows
\begin{align}
    h_{ki} = \left\lbrace \begin{array}{rl}
        1, &  e_k = (i,j),\\
       -1, & e_k = (j,i),\\
        0, & \text{otherwise.}
    \end{array} \right.
\end{align}
Using the incidence matrix, we can write the displacement vector as $\m{z}=[\ldots,\m{z}_{ij}^\top,\ldots]^\top= [\m{z}_1^\top,\ldots,\m{z}_m^\top]^\top = \bar{\m{H}}\m{p}$. For each vector $\m{g}_{ij}$, a corresponding orthogonal projection matrix $\m{P}_{\m{g}_{ij}} = \m{I}_d-\m{g}_{ij} \m{g}_{ij}^\top$ can be defined. The matrix $\m{P}_{\m{g}_{ij}}$ is symmetric positive semidefinite, idempotent $\m{P}_{\m{g}_{ij}}^2=\m{P}_{\m{g}_{ij}}$, and has spectrum $\{0,1,\ldots,1\}$. For any vector $\m{y} \in \mb{R}^d$, $\m{P}_{\m{g}_{ij}}\m{y}=\m{0}_d$ if and only if two vectors $\m{y}$ and $\m{g}_{ij}$ are parallel.

Two formations $(\mc{G}, \m{p})$ and $(\mc{G}, \m{p}')$ are bearing equivalent if and only if $\m{P}_ {\m{g}_{ij}} (\m{p}_j'-\m{p}_i') = \m{0}_d, ~\forall (i, j) \in \mc{E}$. The formations $(\mc{G}, \m{p})$ and $(\mc{G}, \m{p}')$ are bearing congruent if and only if $\m{P}_{\m{g}_{ij}} (\m{p}_j'-\m{p}_i') = \m{0}_d,~ \forall i, j \in \mc{V}, i \neq j$. A formation ($\mc{G}, \m{p}$) is globally bearing rigid if any formation having the same bearing constraints with $(\mc{G}, \m{p})$ is also bearing congruent with $(\mc{G},\m{p})$. Let $\m{g} = [\m{g}_1^\top,\ldots,\m{g}_m^\top]^\top \in \mathbb{R}^{dm}$, the bearing rigidity matrix is defined by
\begin{align}
\m{R}_{\rm b}=\m{R}_{\rm b}(\m{p})= \frac{\partial \m{g}}{\partial \m{p}} \triangleq \text{blkdiag}\left(\frac{\m{P}_{\m{g}_k}}{\Vert \m{z}_k\Vert}\right)\Bar{\m{H}}\in \mathbb{R}^{dm\times dn},
\end{align}
where $\bar{\m{H}}= \m{H}\otimes \m{I}_d$. A formation is infinitesimally bearing rigid in $\mathbb{R}^d$ if and only if $\text{rank}(\m{R}_{\rm b}) = dn-d-1$, this means $\text{ker}(\m{R}_{\rm b}) = \text{im}([\m{1}_n\otimes \m{I}_d, \m{p}-\m{1}_n\otimes \bar{\m{p}}])$, where $\bar{\m{p}}=\frac{1}{n}(\m{1}^\top_n\otimes \m{I}_d)\m{p}$ denotes the formation's centroid. The augmented bearing rigidity matrix $\tilde{\m{R}}_{\rm b}=\text{blkdiag}(\Vert \m{z}_k\Vert \otimes \m{I}_d)\m{R}_{\rm b}=\text{blkdiag}(\m{P}_{\m{g}_k})\Bar{\m{H}}$ has the same rank and the same null space as $\m{R}_{\rm b}$ but does not contain information of the relative distances between the agents $\|\m{z}_k\|$. The bearing Laplacian ${\m{L}}_{\rm b} = \tilde{\m{R}}_{\rm b}^\top \tilde{\m{R}}_{\rm b}$ is symmetric positive semidefinitea and is a special type of matrix-weighted Laplacian \cite{Trinh2026Springer}[Chapter 3]. For an infinitesimally rigid framework, $\tilde{\m{L}}_{\rm b}$ has exactly $d+1$ zero eigenvalues and ker$(\tilde{\m{L}}_{\rm b})=$ker$(\tilde{\m{R}}_{\rm b})$.

\subsection{Problem formulation}
We consider an $n$-agent system in the $d$-dimensional space. Assume that each agent $i$ maintains a local coordinate system ${}^i\Sigma$, which is attached to each agent. The axes' of $n$ local coordinate systems are aligned, but the origins of these coordinate systems differ from each other. 

The agent's motions are modeled by the equation
\begin{align} \label{eq:agent_model}
    \dot{\m{p}}_i = \m{u}_i + \m{d}_i,\, i = 1, \ldots, n,
\end{align}
where $\m{p}_i$ and $\m{d}_i$ correspondingly denote the position of the agent and the disturbance, both written in a global coordinate system. The disturbance vector $\m{d}=[\m{d}_1^\top,\ldots,\m{d}_n^\top]^\top$ is a bounded, time-varying, uniformly continuous functions. The upper bound of the disturbance is denoted as $\sup_{t\ge 0}\|\m{d}\|_{\infty} = \beta>0$, and this information is unknown to each agent. 

Let $\m{p}^*=[(\m{p}_1^*)^\top,\ldots,(\m{p}^*_n)^\top]^\top \in \mb{R}^{dn}$ denote a desired formation (or target formation), which does not have any collocated points. The desired formation induces a set of desired bearing vectors $\Gamma =\{\m{g}_{ij}^* = \frac{\m{p}_j^*-\m{p}_i^*}{\|\m{p}_j^*-\m{p}_i^*\|}\}_{(i,j)\in \mc{E}}$. It is assumed that $\m{p}^*$ is infinitesimally bearing rigid in $\mb{R}^{d}$. We define the set of formations which are bearing congruent to $\m{p}^*$ as $\mc{D} = \{\m{p}\in\mb{R}^{dn}|~\tilde{\m{R}}_{\rm b}^*\m{p}=\m{0}_{dm},\m{p}_i\neq \m{p}_j\forall i,j\in \mc{V},i\ne j\}$, where $\tilde{\m{R}}_{\rm b}^* \triangleq \tilde{\m{R}}_{\rm b}(\m{p}^*)=\m{P}_{\m{g}^*}\bar{\m{H}}$ and $\m{P}_{\m{g}^*} \triangleq \text{blkdiag}(\m{P}_{\m{g}_k^*})$.

To control the local bearing constraints to match with those defined from the target formation, the agents need to sense some geometric variables with regard to their neighbors. Two types of relative sensing variables, namely, the displacements $\m{z}_{ij} = \m{p}_j - \m{p}_i,$ and the bearing vectors $\m{g}_{ij} = \frac{\m{p}_j - \m{p}_i}{\|\m{p}_j - \m{p}_i\|},\forall j \in \mc{N}_i$ will be considered in this paper.

\begin{problem} \label{problem:1}
Suppose that each agent can sense the displacements with regard to its neighbors. Design control law for each agent using the available information so that $\m{p}$ asymptotically converges to the set of formations which is bearing congruent to $\m{p}^*$.
\end{problem}

\begin{problem} \label{problem:2}
Suppose that each agent can sense the bearing vectors with regard to its neighbors. Design control law for each agent using the available information so that $\m{p}$ converges to a formation which is bearing congruent to $\m{p}^*$.
\end{problem}
\section{Adaptive bearing-based formation control with disturbance rejection}
In this section, we first propose a control law for Problem~\ref{problem:1}. Second, we analyse the effect of disturbance on the formation acquisition process, describe the time-varying target formation via the set of desired bearing constraints and the mismatched disturbance. Convergence of the moving target formation is then established via Barbalat's lemma. Finally, we propose a smooth adaptive formation control law that ensures global uniform ultimate boundedness of the desired moving formation.
\label{sec:3}
\subsection{Proposed control law}
We propose the following bearing-based control law to solve Problem~\ref{problem:1}:
\begin{subequations}
\label{eq:bearing_based_control_law}
\begin{align} 
    \m{u}_i &= - \sum_{j\in \mc{N}_i} \gamma_{ij} \m{P}_{\m{g}^*_{ij}} \text{sgn}(\m{q}_{ij}), \, i=1,\ldots, n, \label{eq:bearing_based_control_law_a}\\
        \m{q}_{ij} &= {\m{P}_{\m{g}^*_{ij}}(\m{p}_i-\m{p}_j)}, \label{eq:bearing_based_control_law_b}\\
    \dot{\gamma}_{ij} &=k_{\gamma} \left|\left|\m{q}_{ij} \right|\right|_1,\, \forall (i,j) \in \mc{E}, \label{eq:bearing_based_control_law_c}
\end{align}
\end{subequations}
where $\m{P}_{\m{g}^*_{ij}}=\m{I}_d-\m{g}^*_{ij}(\m{g}^*_{ij})^\top$ can be determined from the desired bearing vector $\m{g}^*_{ij}$, $\text{sgn}(\cdot)$ is the signum function defined element-wise for each element of a vector, $\gamma_{ij}$ are adaptive gains satisfying $\gamma_{ij}(0)>0$, and $k_{\gamma}>0$ is a positive constant.

Let $\m{u}=[\m{u}_1^\top,\ldots,\m{u}_n^\top]^\top$, $\bm{\gamma}=[\ldots,\gamma_{ij},\ldots]^\top =[\gamma_1,\ldots,\gamma_m]^\top$, $\bm{\Gamma} = \text{diag}(\bm{\gamma})$, and $\bar{\bm{\Gamma}} = {\bm{\Gamma}} \otimes \m{I}_d$. The multiagent system under the proposed control law \eqref{eq:bearing_based_control_law} can be expressed in the following form:
\begin{subequations}
\label{eq:bearing_based_system}
\begin{align}
    \dot{\m{p}} &=- (\tilde{\m{R}}_{\rm b}^*)^\top \bar{\bm{\Gamma}}\text{sgn} \left(\tilde{\m{R}}_{\rm b}^*\m{p}\right)+\m{d}, \label{eq:bearing_based_system-1} \\
    \dot{\boldsymbol{\gamma}} &= k_{\gamma} \left[\|\m{q}_1\|_1,\ldots,\|\m{q}_m\|_1\right]^\top, \label{eq:bearing_based_system-2}
\end{align}
\end{subequations}
It is noted that the right-hand-side of Eq.~\eqref{eq:bearing_based_system-1} is discontinuous and we understand the solution of \eqref{eq:bearing_based_system} in Filippov sense \cite{Shevitz1994}. Furthermore, \eqref{eq:bearing_based_system-2} implies that the adaptive gains $\gamma_{ij}$ are non-decreasing in time.

\subsection{Effects of the matched and mismatched disturbances}
Let $\bar{\m{p}}(t) = \frac{1}{n}\sum_{i=1}^n\m{p}_i = \frac{1}{n}(\m{1}_n^\top\otimes\m{I}_d)\m{p}$ be the formation's centroid at time $t$. Let $\m{d}(t)=\m{d}_1 + \m{d}_2 + \m{d}_3$, where $\m{d}_1(t)\in \text{im}(\tilde{\m{R}}_{\rm b}(\m{p}^*))$ is a matched disturbance which can be handled by the control input $\m{u}$, $\m{d}_2(t)\in \text{im}(\m{1}_n \otimes \m{I}_d)$ is the disturbance part corresponding to a translation of the whole formation, and $\m{d}_3(t) \in \text{ker}(\m{L})\setminus\text{im}(\m{1}_n\otimes\m{I}_d)$ is the disturbance part corresponding to a formation scaling about the formation's centroid. It is noticed that $\text{dim}(\text{ker}(\m{L})\setminus\text{im}(\m{1}_n\otimes\m{I}_d)) = 1$.

Under the presence of the mismatched disturbances, the sliding-mode control law \eqref{eq:bearing_based_control_law} has no effect outside $\text{im}(\tilde{\m{R}}_{\rm b})$. The formation's uncontrollable space is driven by the disturbance parts $\m{d}_2$ and $\m{d}_3$, respectively. Defining  
\begin{align} \label{eq:vector_r}
    \m{r}(\m{p}) \triangleq \m{r} =\m{p}-\m{1}_n\otimes \bar{\m{p}},
\end{align}
and let $\m{0}_{dn} \neq \m{r}^*\in \text{ker}(\m{L})\setminus\text{im}(\m{1}_n\otimes\m{I}_d)$ be a nonzero constant vector. Without loss of generality, $\m{r}^*$ is selected to be $\m{r}^* = \m{p}^*-\m{1}_n\otimes \bar{\m{p}}^*$, where $\bar{\m{p}}^*=\frac{1}{n}(\m{1}_n^\top \otimes \m{I}_d)\m{p}^*$.

Then, three components of the disturbance can be expressed as follows
\begin{subequations} \label{eq:disturbance}
\begin{align}
    \m{d}_1 &= (\tilde{\m{R}}_{\rm b}^*)^\top\m{f}_1(t), \\
    \m{d}_2 &= \m{1}_n\otimes \m{f}_2(t), \\
    \m{d}_3 &= f_3(t) \hat{\m{r}}^*,
\end{align}
\end{subequations}
where $\m{f}_1(t)=[\m{f}_{11}^\top,\ldots,\m{f}_{1m}^\top]^\top \in \mb{R}^{dm}$, $\m{f}_{1k}\in \mb{R}^d, \forall k=1,\ldots,m$, $\m{f}_2(t) \in \mb{R}^d$, $\hat{\m{r}}^* \triangleq \frac{\m{r}^*}{\|\m{r}^*\|}$ and ${f}_3(t) \in \mb{R}$. It is not hard to verify that $\m{d}_1,\m{d}_2,\m{d}_3$ are mutually orthogonal to each other. 

The following additional assumption is imposed on the disturbance.
\begin{assumption} \label{assumption:disturbance_term_linear}
The function $f_3(t)$ satisfies $\int_0^t|{f}_3(t)|dt < |\m{r}^\top(0)\hat{\m{r}}^*|$, for all $t \geq 0$.
\end{assumption}

Due to the assumption on the boundedness of $\m{d}$, the disturbance terms $\m{f}_i$ are bounded and uniformly continuous. We can thus assume that there exist $\beta_i > 0$ such that $\sup_{t\geq 0}\|\m{f}_i\|_{\infty} <\beta_i$, $\forall i=1,2,3$. The restricted effect of $f_3(t)$ in Assumption~\ref{assumption:disturbance_term_linear} prevents $n$ agents to shink into a point due to the scaling motion caused by disturbance.

We can now prove the following lemma on the effects of the disturbance to the formation. 

\begin{lemma} \label{lem:3.1}
Consider Problem \ref{problem:1} and suppose that Assumption~\ref{assumption:disturbance_term_linear} holds. Under the control law~\eqref{eq:bearing_based_control_law}, the following claims hold.
\begin{itemize}
\item[i.]  The formation's centroid moves with the velocity $\m{f}_2(t)$.
\item[ii.] The formation's scale ${s}(t)=\m{r}^\top(t)\hat{\m{r}}^*$ satisfies $|s(t)|>0$, $\forall t \geq 0$.
\end{itemize}
\end{lemma}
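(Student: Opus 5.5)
Our plan is to project the closed-loop dynamics \eqref{eq:bearing_based_system-1} onto the two directions the control cannot reach, namely $\text{im}(\m{1}_n\otimes\m{I}_d)$ and $\hat{\m{r}}^*$. Both lie in $\ker(\tilde{\m{R}}_{\rm b}^*)$, so the sliding-mode term should drop out of the projected equations. Because the solutions are Filippov solutions, we will argue with the set-valued map $\dot{\m{p}}\in -(\tilde{\m{R}}_{\rm b}^*)^\top\bar{\bm{\Gamma}}\,\mathcal{K}[\text{sgn}](\tilde{\m{R}}_{\rm b}^*\m{p})+\m{d}$. Every element of this set has the form $-(\tilde{\m{R}}_{\rm b}^*)^\top\bar{\bm{\Gamma}}\bm{\sigma}+\m{d}$ with $\bm{\sigma}\in[-1,1]^{dm}$. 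Therefore any vector in $\ker(\tilde{\m{R}}_{\rm b}^*)$ annihilates the control part almost everywhere, and the projected identities below hold for a.e.\ $t$. The quantities we integrate are absolutely continuous, so the conclusions hold for all $t$.

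\emph{Claim i.} We compute $\dot{\bar{\m{p}}}=\frac1n(\m{1}_n^\top\otimes\m{I}_d)\dot{\m{p}}$. Since $\tilde{\m{R}}_{\rm b}^*(\m{1}_n\otimes\m{I}_d)=\m{P}_{\m{g}^*}(\m{H}\m{1}_n\otimes\m{I}_d)=\m{0}$, the control term vanishes. Using the decomposition \eqref{eq:disturbance}, the projection of $\m{d}_1$ is zero by the same identity. The projection of $\m{d}_3$ is zero because $(\m{1}_n^\top\otimes\m{I}_d)\m{r}^*=\sum_i(\m{p}_i^*-\bar{\m{p}}^*)=\m{0}$. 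For $\m{d}_2$ we get $\frac1n(\m{1}_n^\top\otimes\m{I}_d)(\m{1}_n\otimes\m{f}_2)=\m{f}_2$. Hence $\dot{\bar{\m{p}}}=\m{f}_2(t)$.

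\emph{Claim ii.} We differentiate $s=\m{r}^\top\hat{\m{r}}^*$. First, $(\m{1}_n\otimes\dot{\bar{\m{p}}})^\top\hat{\m{r}}^*=0$ because $\hat{\m{r}}^*$ has zero centroid, so $\dot s=\dot{\m{p}}^\top\hat{\m{r}}^*$. Next, $\m{r}^*=\m{p}^*-\m{1}_n\otimes\bar{\m{p}}^*$ lies in $\ker(\tilde{\m{R}}_{\rm b}^*)$, as established in the bearing rigidity discussion. This removes both the control term and $\m{d}_1$ from $\dot s$. The term $\m{d}_2^\top\hat{\m{r}}^*$ vanishes as in Claim i. What remains is $\dot s=f_3(t)\|\hat{\m{r}}^*\|^2=f_3(t)$, so
\[
s(t)=s(0)+\int_0^t f_3(\tau)\,d\tau .
\]
Applying Assumption~\ref{assumption:disturbance_term_linear} gives $|s(t)|\ge|s(0)|-\int_0^t|f_3|\,d\tau>0$, since $s(0)=\m{r}^\top(0)\hat{\m{r}}^*$.

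The step that needs the most care is justifying the cancellation of the discontinuous control term. This rests on two facts: orthogonality to $\ker(\tilde{\m{R}}_{\rm b}^*)$, and the fact that the Filippov regularization keeps the control term inside $\text{im}((\tilde{\m{R}}_{\rm b}^*)^\top)$. The second holds because $\bar{\bm{\Gamma}}$ multiplies the sign vector before $(\tilde{\m{R}}_{\rm b}^*)^\top$ is applied, so the convexification acts only on the sign vector. We also need the gains $\gamma_{ij}$ to remain finite on the existence interval. This holds on any compact time interval, because $\dot{\gamma}_{ij}$ is bounded by a linear function of $\|\m{p}\|$, which itself grows at most linearly.
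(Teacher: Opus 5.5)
Your proposal is correct and follows the paper's proof. In both, projecting the closed-loop dynamics onto ${\rm im}(\m{1}_n\otimes\m{I}_d)$ and $\hat{\m{r}}^*$ (both in $\ker(\tilde{\m{R}}_{\rm b}^*)$) removes the control term and $\m{d}_1$, giving $\dot{\bar{\m{p}}}=\m{f}_2$ and $\dot s=f_3$, and then $|s(t)|\ge|s(0)|-\int_0^t|f_3(\tau)|d\tau>0$. Your Filippov justification, that the regularized right-hand side stays in the closed affine set ${\rm im}((\tilde{\m{R}}_{\rm b}^*)^\top)+\m{d}(t)$, makes rigorous a step the paper takes for granted.

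One small fix concerns your side remark on finite gains. You have not shown that $\|\m{p}\|$ grows linearly on its own. However, the coupled bounds $\|\dot{\m{p}}\|\le c_1\|\bm{\gamma}\|+c_0$ and $\|\dot{\bm{\gamma}}\|\le c_2\|\m{p}\|$ give an exponential bound by Gronwall, which is enough to rule out finite escape.
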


\begin{IEEEproof}
i. For the first claim, we have
\begin{align}
    \dot{\bar{\m{p}}} &= \frac{1}{n}(\m{1}_n^\top\otimes\m{I}_d) \dot{\m{p}} \nonumber\\
    &= -\frac{1}{n}(\m{1}_n^\top\otimes\m{I}_d) (\tilde{\m{R}}_{\rm b}^*)^\top\left( \bar{\bm{\Gamma}}\text{sgn} \left(\tilde{\m{R}}_{\rm b}^*\m{p}\right) - \m{f}_1 \right) \nonumber\\
    &\qquad + \frac{1}{n}(\m{1}_n^\top\m{1}_n\otimes\m{I}_d)\m{f}_2 + \frac{1}{n}(\m{1}_n^\top\otimes\m{I}_d) {f}_3(t) \frac{\m{r}^*}{\|\m{r}^*\|} \nonumber\\
                      &= \m{f}_2. \label{eq:centroid-bearing-based}
\end{align}
since ${\rm im}(\m{1}_n^\top\m{1}_n\otimes\m{I}_d)\in {\rm ker}(\tilde{\m{R}}_{\rm b}^*)$ and $\m{r}^* \perp {\rm im}(\m{1}_n\otimes \m{I}_d)$.

ii. Clearly, $\dot{s}=\dot{\m{r}}^\top\m{r}^*=f_3(t)$. Thus, $||s(t)|-|s(0)|| \le |s(t) - s(0)| \le \int_0^t|f_3(\tau)|d\tau < |s(0)|$ and it follows that $|s(t)|>0,\forall t \geq 0$.
\end{IEEEproof}

It is noted that $|s(t)|>0$ is only necessary but sufficient for avoiding collisions between agents in the formation. We will consider collision avoidance among $n$ agents after showing convergence of $\m{p}$ to the set of target formation $\mc{D}$.

\subsection{Stability analysis}
From the analysis in the previous section, we can describe a time-varying target formation $\m{p}^d(t)$ as follows
\begin{itemize}
    \item Centroid: $\bar{\m{p}}^d(t) = \bar{\m{p}}(t)$,
    \item Scale: $s^d = s(t)$,
    \item Bearing constraints: $\m{P}_{\m{g}_{ij}^*}(\m{p}_i^d-\m{p}_j^d)=\m{0}_d$, $\forall (i,j)\in \mc{E}$.
\end{itemize}
Thus, 
\begin{align} \label{eq:BB_targetFormation}
    \m{p}^d \triangleq \m{1}_n\otimes \bar{\m{p}}(t) + \sqrt{n} s^d(t) \frac{\m{r}^*}{\|\m{r}^*\|}
\end{align}
is the time-varying target formation satisfying three aforementioned constraints. It follows from Lemma~\ref{lem:3.1} that $\m{p}^d \in \mc{D}\setminus {\rm im}(\m{1}_n), \forall t \geq 0$.

\begin{lemma}
Suppose that Assumption~\ref{assumption:disturbance_term_linear} holds, the formation error $\bm{\delta} \triangleq \m{p}-\m{p}^d$ satisfies $\bm{\delta}(t) \perp {\rm ker}(\m{L}_{\rm b}^*)$, for all time $t\geq 0$. 
\end{lemma}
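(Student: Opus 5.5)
The plan is to use the explicit description of the null space of the bearing Laplacian and check orthogonality of $\bm{\delta}$ against each basis direction separately. Since $\m{p}^*$ is infinitesimally bearing rigid, we have ${\rm ker}(\m{L}_{\rm b}^*)={\rm ker}(\tilde{\m{R}}_{\rm b}^*)={\rm im}([\m{1}_n\otimes\m{I}_d,\ \m{r}^*])$. It therefore suffices to prove two identities for all $t\ge 0$: $(\m{1}_n^\top\otimes\m{I}_d)\bm{\delta}(t)=\m{0}_d$ (translational directions) and $(\hat{\m{r}}^*)^\top\bm{\delta}(t)=0$ (scaling direction).

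\emph{Translational directions.} By \eqref{eq:BB_targetFormation}, $(\m{1}_n^\top\otimes\m{I}_d)\m{p}^d = n\bar{\m{p}}(t) + \sqrt{n}\,s^d(t)(\m{1}_n^\top\otimes\m{I}_d)\hat{\m{r}}^*$. The last term vanishes because $\m{r}^*=\m{p}^*-\m{1}_n\otimes\bar{\m{p}}^*$ has zero centroid, i.e. $\m{r}^*\perp{\rm im}(\m{1}_n\otimes\m{I}_d)$. Since also $(\m{1}_n^\top\otimes\m{I}_d)\m{p}=n\bar{\m{p}}(t)$, the centroid part of $\bm{\delta}$ is zero identically. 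This holds pointwise in $t$; one may add that it is consistent with Lemma~\ref{lem:3.1}(i), since both $\m{p}$ and $\m{p}^d$ have centroid velocity $\m{f}_2$.

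\emph{Scaling direction.} I would write $\m{p}=\m{1}_n\otimes\bar{\m{p}}+\m{r}$ using \eqref{eq:vector_r}. Then $(\hat{\m{r}}^*)^\top\m{p}=(\hat{\m{r}}^*)^\top\m{r}=s(t)$, by the zero-centroid property of $\m{r}^*$ and the definition of $s$ in Lemma~\ref{lem:3.1}(ii). The target gives $(\hat{\m{r}}^*)^\top\m{p}^d$ equal to the scale coefficient of $\hat{\m{r}}^*$ in \eqref{eq:BB_targetFormation}. So the claim reduces to showing that this coefficient equals $s(t)$.

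This last matching is the main obstacle. As written, \eqref{eq:BB_targetFormation} carries a factor $\sqrt{n}$, and read literally it would give $\sqrt{n}\,s(t)$ rather than $s(t)$. I would resolve this by fixing the normalization of the scale so that the $\hat{\m{r}}^*$-component of $\m{p}^d$ is exactly $s^d(t)=s(t)$, absorbing any constant into the definition of $s^d$; this is the normalization under which $\m{p}^d$ has "the same scale" as $\m{p}$. With that choice the scaling component of $\bm{\delta}$ is $s(t)-s(t)=0$.

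Combining the two identities, $\bm{\delta}(t)$ is orthogonal to every column of $[\m{1}_n\otimes\m{I}_d,\ \m{r}^*]$, hence $\bm{\delta}(t)\perp{\rm ker}(\m{L}_{\rm b}^*)$ for all $t\ge0$. Assumption~\ref{assumption:disturbance_term_linear} enters only through Lemma~\ref{lem:3.1}(ii), which guarantees $s(t)\neq0$. This keeps $\m{p}^d(t)$ a genuine, non-degenerate element of $\mc{D}$, so that $\bm{\delta}$ is a meaningful formation error.
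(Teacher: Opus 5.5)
Your proof is correct, but it takes a different route from the paper. The paper argues dynamically. From \eqref{eq:delta_displacement} it gets $\dot{\bm{\delta}}\in{\rm im}((\tilde{\m{R}}_{\rm b}^*)^\top)$, so the component $[\m{1}_n\otimes\m{I}_d,\hat{\m{r}}^*]^\top\bm{\delta}(t)$ is conserved, and it then asserts that this component vanishes at $t=0$. You instead verify, pointwise in $t$, that $\m{p}^d$ is exactly the orthogonal projection of $\m{p}$ onto $\ker(\m{L}_{\rm b}^*)={\rm im}([\m{1}_n\otimes\m{I}_d,\m{r}^*])$. The claim then becomes an algebraic identity valid for every configuration. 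It is independent of the control law (so it covers \eqref{eq:Practical_Bearing_B} verbatim), of the Filippov solution concept, and of Assumption~\ref{assumption:disturbance_term_linear}, which, as you note, only serves to keep $\m{p}^d\in\mc{D}$.

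The paper's unjustified step $[\m{1}_n\otimes\m{I}_d,\hat{\m{r}}^*]^\top\bm{\delta}(0)=\m{0}_{d+1}$ is precisely your computation at $t=0$, so the conservation argument is redundant for this lemma. What the paper's route does add is the error dynamics itself: it shows that $\dot{\m{p}}^d=\m{d}_2+\m{d}_3$ absorbs the mismatched disturbance, so $\bm{\delta}$ is driven only by the matched part. That is what the later Lyapunov analysis uses.

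Your treatment of the $\sqrt{n}$ factor is necessary, not cosmetic. With $s=\m{r}^\top\hat{\m{r}}^*$, the literal \eqref{eq:BB_targetFormation} gives $(\hat{\m{r}}^*)^\top\bm{\delta}=(1-\sqrt{n})s\neq0$. The paper's proof tacitly uses the same normalization you adopt, both at $t=0$ and in the cancellation $\dot{\m{p}}^d=\m{d}_2+\m{d}_3$ behind \eqref{eq:delta_displacement}.
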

\begin{IEEEproof}
We have
\begin{align}
    \dot{\bm{\delta}} &= - (\tilde{\m{R}}_{\rm b}^*)^\top \bar{\bm{\Gamma}}\text{sgn} \left(\tilde{\m{R}}_{\rm b}^*\m{p}\right)+\m{d}_1  \nonumber \\
    &=- (\tilde{\m{R}}_{\rm b}^*)^\top \left(\bar{\bm{\Gamma}} \text{sgn} \left(\tilde{\m{R}}_{\rm b}^*\m{p}\right)+\m{f}_1\right). \label{eq:delta_displacement}
\end{align}
Thus, $\dot{\bm{\delta}} \perp {\rm ker}(\tilde{\m{R}}_{\rm b}^*) = {\rm ker}(\m{L}_{\rm b}^*)$, which implies that $[\m{1}_n\otimes \m{I}_d,\hat{\m{r}}^*]^\top\bm{\delta}(t)=[\m{1}_n\otimes \m{I}_d,\hat{\m{r}}^*]^\top\bm{\delta}(0)= \m{0}_{d+1}$. It follows that $\bm{\delta}(t) \perp {\rm ker}(\m{L}_b^*)$, for all $t\geq 0$.
\end{IEEEproof}

Using the fact that $\m{P}_{\m{g}^*}\bar{\m{H}}\bm{\delta}=\m{P}_{\m{g}^*}\bar{\m{H}}\m{p}-\m{P}_{\m{g}^*}\bar{\m{H}}\m{p}^d=\m{P}_{\m{g}^*}\m{z}$, equation \eqref{eq:delta_displacement} can be rewritten as a $\bm{\delta}$-dynamics follows
\begin{align}
    \dot{\bm{\delta}} &= - (\tilde{\m{R}}_{\rm b}^*)^\top \left(\bar{\bm{\Gamma}} \text{sgn} \left(\tilde{\m{R}}_{\rm b}^*\bm{\delta}\right)+\m{f}_1\right). \label{eq:delta_displacement1}
\end{align}

The main result of this section is stated in the following theorem.
\begin{theorem}[Stability of the desired formation] \label{thm:1}
Consider Problem \ref{problem:1} and suppose that Assumption~\ref{assumption:disturbance_term_linear} is satisfied. Then, the following claims hold.
\begin{itemize}
    \item[i.] If there exists $T\ge 0$ such that $\gamma_{ij}(T)>\beta_1\forall (i,j)\in \mc{E}$, the system \eqref{eq:delta_displacement1} is globally finite-time stable.
    \item[ii.] In general case, $\bm{\delta}(t) \to \m{0}_{dn}$ as $t\to +\infty$.
\end{itemize}
In both cases, there exists a constant vector $\bm{\gamma}^*\in \mb{R}^m$ such that $\bm{\gamma}\to \bm{\gamma}^*$, as $t\to+\infty$.
\end{theorem}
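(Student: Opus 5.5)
Let $\m{y}\triangleq\tilde{\m{R}}_{\rm b}^*\bm{\delta}=[\m{q}_1^\top,\ldots,\m{q}_m^\top]^\top$, so that $\|\m{y}\|_1=\sum_k\|\m{q}_k\|_1$ and \eqref{eq:bearing_based_system-2} reads $\dot\gamma_k=k_\gamma\|\m{q}_k\|_1$. All derivatives below are understood via the Filippov set-valued Lie derivative of \cite{Shevitz1994}. For part ii I will use the Lyapunov candidate
\begin{align*}
V=\tfrac12\bm{\delta}^\top\bm{\delta}+\frac{1}{2k_\gamma}\sum_{k=1}^m(\gamma_k-\beta_1)^2 .
\end{align*}
Along \eqref{eq:delta_displacement1} we have $\bm{\delta}^\top\dot{\bm\delta}=-\m{y}^\top\bar{\bm\Gamma}\,\text{sgn}(\m{y})-\m{y}^\top\m{f}_1$. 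Since $\m{y}^\top\bar{\bm\Gamma}\,\text{sgn}(\m{y})=\sum_k\gamma_k\|\m{q}_k\|_1$ (this also holds for every element of the Filippov set, because $\m{q}_k^\top\xi=\|\m{q}_k\|_1$ for any $\xi\in\text{SGN}(\m{q}_k)$) and $|\m{y}^\top\m{f}_1|\le\sum_k\|\m{q}_k\|_1\|\m{f}_{1k}\|_\infty\le\beta_1\sum_k\|\m{q}_k\|_1$, the adaptive term cancels the $\gamma_k$ contribution. This gives
\begin{align*}
\dot V\le-\sum_k\gamma_k\|\m{q}_k\|_1+\beta_1\sum_k\|\m{q}_k\|_1+\sum_k(\gamma_k-\beta_1)\|\m{q}_k\|_1=0 .
\end{align*}
The cancellation is exact, so I will keep a margin by replacing $\beta_1$ in $V$ with some $\beta_1'>\beta_1$ (recall $\sup\|\m{f}_1\|_\infty<\beta_1$). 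Then
\begin{align*}
\dot V\le-(\beta_1'-\beta_1)\|\m{y}\|_1\le 0 .
\end{align*}

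From $\dot V\le0$, both $\bm\delta$ and $\bm\gamma$ are bounded. Integrating gives $\int_0^\infty\|\m{y}\|_1\,dt<\infty$. Because $\bm\gamma$ is nondecreasing and bounded, it converges to some constant $\bm\gamma^*$; since the same $V$ works in case i, this proves the final claim in both cases. Next I invoke Barbalat's lemma. Boundedness of $\bm\gamma$, $\m{f}_1$ and the signum term makes $\dot{\bm\delta}$ essentially bounded, so $\m{y}$ is Lipschitz and hence uniformly continuous, and therefore $\m{y}(t)\to\m{0}$. To pass from $\m{y}$ to $\bm\delta$ I use the previous lemma: $\bm\delta(t)\perp\ker(\m{L}_{\rm b}^*)$, and infinitesimal bearing rigidity gives $\ker\tilde{\m{R}}_{\rm b}^*=\ker\m{L}_{\rm b}^*$. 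Hence
\begin{align*}
\|\m{y}\|^2=\bm\delta^\top\m{L}_{\rm b}^*\bm\delta\ge\lambda_{d+2}(\m{L}_{\rm b}^*)\|\bm\delta\|^2 ,
\end{align*}
where $\lambda_{d+2}$ is the smallest nonzero eigenvalue, and so $\bm\delta\to\m{0}_{dn}$.

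For part i, take $W=\tfrac12\|\bm\delta\|^2$ for $t\ge T$. Since the gains are nondecreasing, $\gamma_k(t)\ge\gamma_{\min}\triangleq\min_k\gamma_k(T)>\beta_1$ for all $t\ge T$, and the same computation gives
\begin{align*}
\dot W\le-(\gamma_{\min}-\beta_1)\|\m{y}\|_1\le-(\gamma_{\min}-\beta_1)\|\m{y}\|_2\le-(\gamma_{\min}-\beta_1)\sqrt{2\lambda_{d+2}}\,W^{1/2}.
\end{align*}
The comparison lemma then yields $W(t)=0$ for all $t\ge T+\sqrt{2W(T)}/\big((\gamma_{\min}-\beta_1)\sqrt{2\lambda_{d+2}}\big)$. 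On $[0,T]$ the state stays bounded by part ii, so the system is globally finite-time stable. Once $\bm\delta\equiv\m{0}$, we have $\dot{\bm\gamma}=\m{0}$, which gives $\bm\gamma^*=\bm\gamma(T')$ directly.

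The main obstacle is rigor in the nonsmooth setting. Two points need care: the chain rule for $V$ along Filippov solutions, and the Barbalat step, where $\dot{\m{y}}$ is only essentially bounded. Uniform continuity of $\m{y}$ via Lipschitz continuity of absolutely continuous solutions should suffice. Alternatively, I would use $\int_0^\infty\|\m{y}\|_1<\infty$ together with this uniform continuity, which is exactly the Barbalat form needed.
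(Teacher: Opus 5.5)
Your proposal is correct and follows the paper's proof essentially step for step: the same adaptive Lyapunov function $\frac12\|\bm\delta\|^2+\frac{1}{2k_\gamma}\|\bm\gamma-\gamma^*\m{1}\|^2$ with $\gamma^*>\beta_1$ plus Barbalat and $\bm\delta\perp\ker(\m{L}_{\rm b}^*)$ for part ii, $\frac12\|\bm\delta\|^2$ with the comparison lemma after time $T$ for part i, and convergence of $\bm\gamma$ from monotonicity and boundedness. Two of your details are cleaner than the paper's: you apply Barbalat to the integrable, uniformly continuous signal $\|\m{y}\|_1$ rather than asserting uniform continuity of $\dot V$, and you correctly use $\lambda_{d+2}(\m{L}_{\rm b}^*)$ as the smallest nonzero eigenvalue; one slip is that your settling-time estimate is too small by a factor $\sqrt2$ (the comparison solution vanishes at $T+2\sqrt{W(T)}/\big((\gamma_{\min}-\beta_1)\sqrt{2\lambda_{d+2}}\big)$), which does not affect the finite-time conclusion.
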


\begin{IEEEproof} i. For the first claim, consider the Lyapunov function $V = \frac{1}{2} \|\boldsymbol{\delta}\|^2$, which is continuously differentiable, positive definite, radially unbounded, and being bounded by two class $\mc{K}_{\infty}$ functions $\frac{h_1}{2} \|\boldsymbol{\delta}\|^2$ and $\frac{h_2}{2} \|\boldsymbol{\delta}\|^2$, for any $0<h_1 < 1 < h_2$. Then, we can compute the generalized gradient of $V$ \cite{Shevitz1994} for $t\geq T$ as $\partial V = \{\boldsymbol{\delta}\}$. The time derivative of $V$ is then
\[\dot{V} \in^{a.e} \dot{\tilde{V}} = \bigcup_{\boldsymbol{\xi} \in \partial V} \boldsymbol{\xi}^\top \text{K}[\dot{\boldsymbol{\delta}}],\] 
where ${\rm K}[{\rm sgn}](\cdot)$ denotes the differential inclusion, which is defined as
\begin{align}
    {\rm K}[{\rm sgn}](x) = \left\lbrace \begin{array}{rl}
        1, &  x>0,\\
       -1,  & x<0,\\
       {[-1,1]}, & x=0,
    \end{array} \right.
\end{align}
for a real number $x$. Further, ${\rm K}[{\rm sgn}](\cdot)$ is defined element-wise for a real vector. It follows that
\begin{align*}
    \dot{{V}} &= \boldsymbol{\delta}^\top (\tilde{\m{R}}_{\rm b}^*)^\top \left(-\bar{\boldsymbol{\Gamma}}\text{K}[\text{sgn}] \big(\m{P}_{\m{g}^*}\m{z}\big)+\m{f}_1\right) \\
    &= \m{z}^\top \m{P}_{\m{g}^*} \left(-\bar{\boldsymbol{\Gamma}}\text{K}[\text{sgn}] \big(\m{P}_{\m{g}^*}\m{z}\big)+\m{f}_1\right) \\
    &= -\sum_{(i,j)\in\mc{E}} \left(\gamma_{ij} \|\m{P}_{\m{g}_{ij}^*} \m{z}_{ij} \|_1 - \m{z}_{ij}^\top \m{P}_{\m{g}_{ij}^*} \m{f}_{1k} \right) \\
    &\leq -\sum_{(i,j)\in\mc{E}} \left( \gamma_{ij} - \|\m{f}_{1k}\|_{\infty} \right)  \|\m{P}_{\m{g}_{ij}^*} \m{z}_{ij} \|_1.
\end{align*}
Let $\zeta = \min_{(i,j)\in \mc{E}} (\gamma_{ij}(T) - \beta_1) =\min_{k=1,\ldots,m} (\gamma_k(T) - \beta_1) > 0$, we have
\begin{align}
    \dot{V} &\leq - \zeta  \sum_{k=1}^m \|\m{P}_{\m{g}_k^*} \m{z}_k \|_1 \leq -\zeta \|\m{P}_{\m{g}^*}\m{z}\|_1 \leq -\zeta \| \tilde{\m{R}}_{\rm b}^* \m{p}\|_1 \nonumber\\
    &\leq -\zeta \| \tilde{\m{R}}_{\rm b}^* \m{p}\| \leq - \zeta \left(\boldsymbol{\delta}^\top \boldsymbol{\m{L}}_{\rm b}^* \boldsymbol{\delta}\right)^{1/2}. \label{eq:5}
\end{align}
Now, using the fact that $\bm{\delta}(t) \perp {\rm ker}(\m{L}_{\rm b}^*)$, we have
\begin{align}
    \dot{V} &\leq -\zeta \left(\lambda_{d+1}(\m{L}_{\rm b}^*)\bm{\delta}^\top\bm{\delta}\right)^{1/2} \nonumber \\ 
    &\leq -\zeta \sqrt{\lambda_{d+1}(\m{L}_{\rm b}^*)}\|\bm{\delta}\| \nonumber \\ 
    &\leq -\underbrace{\zeta\sqrt{2\lambda_{d+1}(\m{L}_{\rm b}^*)}}_{\triangleq \rho}V^{1/2}.
\end{align}
Consider the differential equation $\dot{y}=-\rho' y^{1/2}$, $y(t)=V(t)\geq 0$, $t\geq T$. Separating the variables gives $\frac{dy}{\sqrt{y}} = -\rho dt$, and thus, the solution can be found as $\sqrt{y(t)}=\sqrt{y(T)} -\frac{1}{2}\varepsilon t$ for $T\leq t \leq T_0$ and $y(t)=0$ for $t\ge T_0 \triangleq \frac{2\sqrt{y(0)}}{\varepsilon}$. Based on comparison lemma \cite[Section 3.4]{Khalil2002nonlinear}, $V(t)\leq y(t),\forall t\geq T$. Thus, $V(t)=0$ and $\bm{\delta}=\m{0}_{dn}$ for $t\ge T_0$. 

Since $\bm{\delta}=\m{0}_{dn}$ implies that $\m{q}_{ij}=\m{0}_d$, $\dot{{\gamma}}_{ij}=\m{0}$ and thus $\gamma_{ij}=\gamma_{ij}(T_0)$, $\forall t\geq T_0$.

ii. Consider the Lyapunov function $V = \frac{1}{2} \|\bm{\delta}\|^2 + \frac{1}{2k_{\gamma}} \|\bm{\gamma} - \gamma^* \m{1}_n\|^2$, for some $\gamma^* > \beta_1$. Then, $V$ is continuously differentiable, positive definite and radially unbounded with regard to $\bm{\eta}=[\bm{\delta}^\top,(\bm{\gamma} - \gamma^* \m{1}_n)^\top]^\top$. 

Similar to the proof of Lemma~\ref{lem:3.1}, we can compute
\begin{align}
    \dot{V} &= -\sum_{k=1}^m \left(\gamma_k \|\m{P}_{\m{g}_k^*} \m{z}_k \|_1 - \m{z}_k^\top \m{P}_{\m{g}_k^*} \m{f}_k \right)  \nonumber\\
    &\qquad\qquad + \sum_{k=1}^m (\gamma_k - \gamma^*) \|\m{P}_{\m{g}_k^*} \m{z}_k \|_1   \nonumber\\
    &\leq -\sum_{k=1}^m (\gamma^* - \beta_1) \|\m{P}_{\m{g}_k^*} \m{z}_k \|_1   \nonumber\\
    &\leq -(\gamma^* - \beta_1)\sum_{k=1}^m \|\m{P}_{\m{g}_k^*} \m{z}_k \|  \nonumber\\
    &\leq - (\gamma^* - \beta_1) \left( \bm{\delta}^\top \m{L}_{\rm b}^* \bm{\delta}\right)^{1/2}  \nonumber\\
    &\leq -(\gamma^* - \beta_1)\sqrt{2\lambda_{d+1}(\m{L}_{\rm b}^*)}\|\bm{\delta}\| \leq 0,
\end{align}
which implies that $\bm{\delta}$, $\bm{\gamma}-\bm{\gamma}^*$ are uniformly bounded, and so is $\bm{\gamma}$. Moreover, there exists $\lim_{t \to + \infty} V(t) \geq 0$. Since $\dot{V}$ is uniformly continuous, it follows from the Barbalat's lemma \cite{Khalil2002nonlinear} that $\dot{V} \to 0$, as $t \to +\infty$. Because $\bm{\delta} \perp \text{ker}({\m{L}}_{\rm b}^*)$, it follows that $\bm{\delta} \to \m{0}_{dn}$, as $t \to +\infty$. 

Finally, as ${\gamma}_{ij}$ are bounded and non-decreasing, there must exists finite constants $\gamma_{ij}^*$ such that $\gamma_{ij}\to \gamma_{ij}^*$, $\forall (i,j)\in \mc{E}$.
\end{IEEEproof}

From Theorem~\ref{thm:1}, we conclude that $\m{p}(t) \to \mc{D}$ as desired. Now, we impose further requirements for collision avoidance. 

\begin{corollary}[Collision avoidance 1] \label{cor:CA1} Suppose that Assumption~\ref{assumption:disturbance_term_linear} holds, and 
\begin{align} \label{eq:collisionAvoidance}
\inf_{t\geq 0}\|\m{p}_{i}^d-\m{p}_{j}^d\|-\sqrt{2nV(0)}>0,\; \forall i,j\in\mc{V},\; i\neq j,
\end{align}
there will be no collision between $n$ agents for all time $t\geq 0$.
\end{corollary}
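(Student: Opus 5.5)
The plan is to combine the monotonicity of the Lyapunov function from the proof of Theorem~\ref{thm:1} with a triangle-inequality argument. Here $V$ denotes the composite function of part ii, $V=\frac{1}{2}\|\bm{\delta}\|^2+\frac{1}{2k_{\gamma}}\|\bm{\gamma}-\gamma^*\m{1}\|^2$. For that $V$ we showed $\dot V\le 0$ almost everywhere along Filippov solutions, so $V(t)\le V(0)$ for all $t\ge 0$. Since the adaptive term is nonnegative, this gives
\begin{align*}
\tfrac{1}{2}\|\bm{\delta}(t)\|^2\le V(t)\le V(0),
\end{align*}
and hence $\|\bm{\delta}(t)\|\le\sqrt{2V(0)}$ uniformly in $t$. (If instead $V=\frac12\|\bm\delta\|^2$ from part i is meant, the same bound holds for $t\ge T$ by monotonicity. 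I will use the composite $V$ so the bound covers all $t\ge0$.)

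Next I convert this into a bound on the relative positions of every pair of agents, not only neighbors. Since $\bm{\delta}_i=\m{p}_i-\m{p}_i^d$, we have
\begin{align*}
\m{p}_i-\m{p}_j=(\m{p}_i^d-\m{p}_j^d)+(\bm{\delta}_i-\bm{\delta}_j),
\end{align*}
and the reverse triangle inequality gives
\begin{align*}
\|\m{p}_i-\m{p}_j\|\ge\|\m{p}_i^d-\m{p}_j^d\|-\|\bm{\delta}_i-\bm{\delta}_j\|.
\end{align*}
To bound $\|\bm{\delta}_i-\bm{\delta}_j\|$ I use $\|\bm{\delta}_i-\bm{\delta}_j\|\le\|\bm{\delta}_i\|+\|\bm{\delta}_j\|\le 2\|\bm{\delta}\|$, or more sharply $\|\bm\delta_i-\bm\delta_j\|\le\sqrt{2}\|\bm\delta\|$ by Cauchy–Schwarz. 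Either bound is crude enough to be dominated by $\sqrt{n}\,\|\bm\delta\|\le\sqrt{2nV(0)}$ when $n\ge 2$ (for the factor $2$, one needs $n\ge4$, or else the $\sqrt2$ bound). The $\sqrt{n}$ in the hypothesis likely comes from summing per-agent bounds, but any of these constants suffices.

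Combining the steps gives, for all $t\ge 0$,
\begin{align*}
\|\m{p}_i(t)-\m{p}_j(t)\|\ge\inf_{t\ge0}\|\m{p}_i^d-\m{p}_j^d\|-\sqrt{2nV(0)}>0,
\end{align*}
so no two agents collide. The main obstacle is rigor rather than algebra: the Lyapunov argument in Theorem~\ref{thm:1} presupposes that the bearing-based closed loop \eqref{eq:delta_displacement1} is well-defined, which it is globally because the control uses the constant $\m{P}_{\m{g}^*_{ij}}$ and needs no non-collocation. I also need $\m{p}^d(t)$ to have distinct points, which the infimum hypothesis supplies, consistently with Lemma~\ref{lem:3.1}(ii) ensuring $s(t)\neq 0$. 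Finally, I must verify that $\dot V\le0$ holds from $t=0$ for the composite $V$ and not only after some time $T$; I would check this first, and it does follow from the part ii computation with $\gamma^*>\beta_1$.
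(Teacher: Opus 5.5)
Your proposal is correct and follows essentially the same route as the paper. Both use the triangle inequality $\|\m{p}_i^d-\m{p}_j^d\|\le\|\bm{\delta}_i\|+\|\bm{\delta}_j\|+\|\m{p}_i-\m{p}_j\|$ with $\|\bm{\delta}_i\|+\|\bm{\delta}_j\|\le\sqrt{n}\|\bm{\delta}\|$, combined with $\|\bm{\delta}(t)\|\le\sqrt{2V(t)}\le\sqrt{2V(0)}$ from the Lyapunov analysis of Theorem~\ref{thm:1}; your explicit choice of the composite $V$ from part ii, which is nonincreasing from $t=0$ whereas part i's $V$ only decreases after $T$, and your sharper $\sqrt{2}$ constant are sensible clarifications of points the paper leaves implicit.
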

\begin{IEEEproof}
Consider the inequality
\begin{align*}
\|\m{p}_i^d - \m{p}_j^d\| &= \|(\m{p}_i^d - \m{p}_i) - (\m{p}_j^d - \m{p}_j) + (\m{p}_i-\m{p}_j)\| \\
& \leq \|\m{p}_i - \m{p}_i^d\| + \|\m{p}_j - \m{p}_j^d\| + \|\m{p}_i-\m{p}_j\|,\\
&\leq \sqrt{n\sum_{i=1}^n\|\m{p}_i-\m{p}^d_i\|^2}+\|\m{p}_i-\m{p}_j\|\\
&=\sqrt{n}\|\bm{\delta}\|+\|\m{p}_i-\m{p}_j\|,
\end{align*}
From the analysis in Theorem~\ref{thm:1}, there holds $\|\bm{\delta}(t)\| \leq \sqrt{2V(t)}\leq \sqrt{2V(0)}$. By combining $\|\m{p}_i-\m{p}_j\| \geq \|\m{p}_i^d - \m{p}_j^d\| - \sqrt{2nV(0)} \geq \inf_{t\geq 0}\|\m{z}_{ij}^d\|-\sqrt{2nV(0)} $ with the inequality \eqref{eq:collisionAvoidance}, the distance between any pair of agents $i,j\in \mc{V},i\neq j$ maintains positive.
\end{IEEEproof}

\begin{remark} Although being the formation control law \eqref{eq:bearing_based_control_law_a} is expressed in the same form as a matrix-weighted consensus protocol \cite{Trinh2026Springer}[Chapter 9], it can be implemented using only bearing measurements since ${\rm sgn}(\m{P}_{\m{g}_{ij}^*}(\m{p}_{i}-\m{p}_j))={\rm sgn}(\m{P}_{\m{g}_{ij}^*}\m{g}_{ij})$. Thus, if the magnitude of the disturbance is known, \eqref{eq:bearing_based_control_law_a} provides a bearing-only stabilization control law. However, the adaptive mechanism \eqref{eq:bearing_based_control_law_c} uses displacements (in $\|\m{q}_{ij}\|$) for adjusting $\gamma_{ij}$. We observe from simulations that a modification of adaptive law \eqref{eq:bearing_based_control_law_c} using only bearing measurements also work, however, did not found a proof for this fact.
\end{remark}
\subsection{A smooth bearing-based control law}
A main issue in implementing sliding-mode control laws is chattering, which is due to the discontinuities of the signum function. An effective strategy for avoiding chattering is compromising the control performance, for example, relaxing the requirement for perfect disturbance rejection. To this end, the following adaptive bearing-based control law is proposed for each agent
\begin{subequations}
\label{eq:Practical_Bearing_B}
\begin{align} 
    \m{u}_i &= -k_p \sum_{j\in \mc{N}_i} \m{q}_{ij} - \sum_{j\in \mc{N}_i} \gamma_{ij} \frac{\m{q}_{ij}}{\|\m{q}_{ij}\|+\varepsilon}, \label{eq:PractBearingB1}\\
    \dot{\gamma}_{ij} &= k_{\gamma} \left(\frac{\|\m{q}_{ij}\|^2}{\| \m{q}_{ij}\|+\varepsilon} -\alpha\gamma_{ij}\right), \; \forall j\in\mc{N}_i,\label{eq:PractBearingB2}
\end{align}
\end{subequations}
where $\m{q}_{ij}=\m{P}_{\m{g}_{ij}^*} (\m{p}_{i}-\m{p}_j)$, $\varepsilon>0$, and $\alpha>0$ are small positive constants, and $k_p>0$ is a positive control gain. 

Similar to the previous subsection, we can prove that if Assumption~\ref{assumption:disturbance_term_linear} holds, under the smooth control law \eqref{eq:Practical_Bearing_B}, both claims of Lemma~\ref{lem:3.1} are still valid. Therefore, we can also define the time-varying target formation $\m{p}^d$ as in Eq.~\eqref{eq:BB_targetFormation}. Let $\bm{\delta}=\m{p}-\m{p}^d$, and $\tilde{\m{P}}_{\rm g^*}^\epsilon \triangleq {\rm blkdiag}\left(\frac{{\m{P}}_{\rm g^*_k}}{\|\m{q}_{k}\|+\varepsilon}\right)$, the $\bm{\delta}$-dynamics can be written as follows
\begin{align}
    \dot{\bm{\delta}} &=- k_p\m{L}_{\rm b}^*\bm{\delta} - \bar{\m{H}}^\top \tilde{\m{P}}_{\rm g^*}\left(\bar{\bm{\Gamma}} \tilde{\m{P}}_{\rm g^*}^\epsilon \bar{\m{H}}\bm{\delta}+\m{f}_1\right). \label{eq:delta_displacementP}
\end{align}

The main result of this subsection is stated in the following theorem.
\begin{theorem}[Globally uniform ultimate boundedness of the desired formation] \label{thm:3.2} Consider Problem~\ref{problem:1} and suppose that Assumption \ref{assumption:disturbance_term_linear} holds. Under the control law \eqref{eq:Practical_Bearing_OnlyC}, $\bm{\delta}$ and $\bm{\gamma}$ are globally uniformly ultimately bounded, and the ultimate bound is adjustable by the control law's parameters $k_p,k_{\gamma},\varepsilon,$ and $\alpha$.
\end{theorem}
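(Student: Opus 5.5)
We will use a standard Lyapunov argument for $\sigma$-modification adaptive laws, applied to the $\bm{\delta}$-dynamics \eqref{eq:delta_displacementP}, together with the invariance $\bm{\delta}(t)\perp{\rm ker}(\m{L}_{\rm b}^*)$. That invariance holds for the smooth law exactly as in the preceding lemma, because the right-hand side of \eqref{eq:delta_displacementP} lies in ${\rm im}((\tilde{\m{R}}_{\rm b}^*)^\top)$.

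We take the candidate
\[
V=\tfrac12\|\bm{\delta}\|^2+\tfrac{1}{2k_\gamma}\|\bm{\gamma}-\gamma^*\m{1}_m\|^2,
\]
with a constant $\gamma^*>\beta_1$ to be fixed later. Since $\tilde{\m{R}}_{\rm b}^*\bm{\delta}=\m{P}_{\m{g}^*}\m{z}=[\m{q}_k]$ (up to orientation sign), the derivative along trajectories is
\[
\dot V=-k_p\bm{\delta}^\top\m{L}_{\rm b}^*\bm{\delta}-\sum_k\gamma_k\frac{\|\m{q}_k\|^2}{\|\m{q}_k\|+\varepsilon}-\sum_k\m{q}_k^\top\m{f}_{1k}+\sum_k(\gamma_k-\gamma^*)\Big(\frac{\|\m{q}_k\|^2}{\|\m{q}_k\|+\varepsilon}-\alpha\gamma_k\Big).
\]
The $\gamma_k$-weighted terms cancel, leaving
\[
\dot V=-k_p\bm{\delta}^\top\m{L}_{\rm b}^*\bm{\delta}-\gamma^*\sum_k\frac{\|\m{q}_k\|^2}{\|\m{q}_k\|+\varepsilon}-\sum_k\m{q}_k^\top\m{f}_{1k}-\alpha\sum_k(\gamma_k-\gamma^*)\gamma_k .
\]

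Next we bound the disturbance term. Using $\|\m{f}_{1k}\|\le\sqrt d\,\beta_1$ (taking $\beta_1$ to dominate the Euclidean norm) and the identity $\|\m{q}\|=\frac{\|\m{q}\|^2}{\|\m{q}\|+\varepsilon}+\frac{\varepsilon\|\m{q}\|}{\|\m{q}\|+\varepsilon}\le\frac{\|\m{q}\|^2}{\|\m{q}\|+\varepsilon}+\varepsilon$, we get
\[
-\sum_k\m{q}_k^\top\m{f}_{1k}\le\beta_1\sum_k\frac{\|\m{q}_k\|^2}{\|\m{q}_k\|+\varepsilon}+m\varepsilon\beta_1 .
\]
With $\gamma^*\ge\beta_1$ the fraction terms become nonpositive and can be dropped. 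For the $\sigma$-modification term we use
\[
-(\gamma_k-\gamma^*)\gamma_k\le-\tfrac12(\gamma_k-\gamma^*)^2+\tfrac12(\gamma^*)^2 .
\]
Finally, $\bm{\delta}\perp{\rm ker}(\m{L}_{\rm b}^*)$ gives $\bm{\delta}^\top\m{L}_{\rm b}^*\bm{\delta}\ge\lambda_{d+1}(\m{L}_{\rm b}^*)\|\bm{\delta}\|^2$. Combining these yields
\[
\dot V\le-2k_p\lambda_{d+1}\tfrac12\|\bm{\delta}\|^2-\alpha k_\gamma\tfrac{1}{2k_\gamma}\|\bm{\gamma}-\gamma^*\m{1}\|^2+m\varepsilon\beta_1+\tfrac{\alpha m}{2}(\gamma^*)^2\le-cV+\mu,
\]
where $c=\min(2k_p\lambda_{d+1}(\m{L}_{\rm b}^*),\alpha k_\gamma)$ and $\mu=m\varepsilon\beta_1+\frac{\alpha m}{2}(\gamma^*)^2$.

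The comparison lemma then gives $V(t)\le e^{-ct}V(0)+\mu/c$. Hence $\bm{\delta}$ and $\bm{\gamma}$ are globally uniformly ultimately bounded, with $\limsup\|\bm{\delta}\|\le\sqrt{2\mu/c}$, and this bound shrinks as $\varepsilon,\alpha\to0$ or as $k_p$ grows. We also note that $\gamma_k(t)\ge0$ for all $t$ whenever $\gamma_k(0)>0$, since $\dot\gamma_k\ge-\alpha k_\gamma\gamma_k$, so the control gains remain positive.

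The step I expect to be the main obstacle is the adjustability claim. Because $c$ involves $\alpha$, the ratio $\mu/c$ does not simply go to zero as $\alpha\to0$. I would handle this by keeping the two sources separate: when $c=\alpha k_\gamma$, the $\alpha$-term contributes $\frac{m(\gamma^*)^2}{2k_\gamma}$, which is made small by large $k_\gamma$, while the $\varepsilon$-term is made small by small $\varepsilon$. A cleaner route for the $\bm{\delta}$ bound alone is to use the sublevel argument directly: $\|\bm{\delta}\|$ is ultimately at most of order $\sqrt{(m\varepsilon\beta_1+\alpha m(\gamma^*)^2/2)/(k_p\lambda_{d+1})}$ outside the ball where $\dot V<0$. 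A secondary point is that the invariance $\bm{\delta}\perp{\rm ker}(\m{L}_{\rm b}^*)$, and the validity of Lemma~\ref{lem:3.1} under \eqref{eq:Practical_Bearing_B}, must be checked first. Both follow because every control term has the form $(\tilde{\m{R}}_{\rm b}^*)^\top(\cdot)$.
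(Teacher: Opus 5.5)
Your proof is essentially the paper's argument in Appendix~\ref{app:thm3_2}. It uses the same Lyapunov function (your $\gamma^*$ plays the role of $\beta_4$), the same cancellation of the $\gamma_k$-weighted terms, the same completion of squares for the $\sigma$-modification, and the same estimate $\dot V\le -cV+\mu$ with $c=\min\{2k_p\lambda_{d+1}(\m{L}_{\rm b}^*),\alpha k_\gamma\}$, closed by the comparison lemma. You are more careful than the paper in keeping the residual $m\varepsilon\beta_1$: the paper's last line of \eqref{eq:dotV_BB_pract} drops the nonnegative term $\sum_k\beta_4\varepsilon\|\m{q}_k\|/(\|\m{q}_k\|+\varepsilon)$. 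However, your side remark that $\|\bm{\delta}\|$ is ultimately of order $\sqrt{\mu/(k_p\lambda_{d+1})}$ does not follow from the sublevel-set argument: $V$ couples $\bm{\delta}$ and $\bm{\gamma}$, so it only yields $\limsup\|\bm{\delta}\|\le\sqrt{2\mu/c}$.
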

\begin{IEEEproof}
 The proof can be found in Appendix \ref{app:thm3_2}.
\end{IEEEproof}

\section{Adaptive bearing-only formation control with disturbance rejection}
\label{sec:4}
This section focuses on Problem~\ref{problem:2}, i.e., designing bearing-only formation control for $n$ agents with disturbances. The complexity of the problem is significantly increased, as the agents need to simultaneously achieve all the desired bearing constraints and suppress the disturbances at the same time relying on only the measured bearing vectors. We will firstly propose an adaptive sliding-mode based bearing-only control law, and then provide a corresponding stability analysis. Then, we provide an analysis on a smooth adaptive bearing-only control law, which guarantees uniform ultimate boundedness of the desired formation. 

\subsection{Proposed control law}
Consider the system of single-integrator agents with uncertainty \eqref{eq:agent_model}. The proposed bearing-only control law for each agent $i =1,\ldots,n$ is given as follows
\begin{subequations}
\label{eq:Bearing_OnlyC}
\begin{align} 
    \m{u}_i &= -\sum_{j\in \mc{N}_i} \gamma_{ij} \m{P}_{\m{g}_{ij}} \text{sgn}\left(\m{P}_{\m{g}_{ij}} \m{g}_{ij}^* \right), \label{eq:Bearing_OnlyC1}\\
    \dot{\gamma}_{ij} &= k_{\gamma} \|\m{P}_{\m{g}_{ij}} \m{g}_{ij}^*\|_1,\; \gamma_{ij}(0)>0,\; \forall j\in \mc{N}_i. \label{eq:Bearing_OnlyC2}
\end{align}
\end{subequations}
Since the right-hand-side of Eq.~\eqref{eq:system_BOM} is discontinuous, the solution of \eqref{eq:system_BOM} is understood in Filippov sense. Clearly, $\gamma_{ij}(t) \geq \gamma_{ij}(0)>0$ for all $t\geq 0$ and $(i,j)\in \mc{E}$.

\subsection{Stability analysis}
In this subsection, the stability analysis of the $n$-agent system under the bearing-only control law \eqref{eq:Bearing_OnlyC} will be considered. We will first consider the following assumption on the disturbance in the stability analysis, as the target formation in this case is well-defined. 
\begin{assumption} \label{assumption:2} The disturbance $\m{d}$ is uniformly continuous and bounded. Moreover $\m{d} \perp \m{p}$ 
for all time $t\geq 0$.
\end{assumption}

Under Assumption~\ref{assumption:2}, the disturbance can be expressed as $\m{d} = \tilde{\m{R}}_{\rm b}^\top \m{f}_1+\m{1}_n\otimes \m{f}_2$, where the upper bounds $\sup_{t\ge 0} \|\m{f}_1\|_{\infty} = \beta_5>0$ and $\sup_{t\ge 0} \|\m{f}_2\|_{\infty} = \beta_2>0$ are finite. By denoting $\m{P}_{\m{g}}={\rm blkdiag}(\m{P}_{\m{g}_k})$, we can express the $n$-agent system under the control law \eqref{eq:Bearing_OnlyC1} in the matrix form as follows:
\begin{align}
    \dot{\m{p}} &= \tilde{\m{R}}_{\rm b}^\top \left( \bar{\bm{\Gamma}}  \text{sgn}\left(\m{P}_{\m{g}} \m{g}^* \right) + \m{f}_1\right) + \m{1}_n\otimes \m{f}_2\nonumber\\
    &= \tilde{\m{R}}_{\rm b}^\top \left( \bar{\bm{\Gamma}} \text{sgn} \left( \text{blkdiag} (\|\m{z}_k^d\| \m{P}_{\m{g}_k}) \m{g}^* \right) + \m{f}_1\right) + \m{1}_n\otimes \m{f}_2\nonumber\\
    &= \tilde{\m{R}}_{\rm b}^\top \left(\bar{\bm{\Gamma}} \text{sgn} \left(\m{P}_{\m{g}} \m{z}^d \right)+ \m{f}_1\right) + \m{1}_n\otimes \m{f}_2\nonumber\\
    &= -\tilde{\m{R}}_{\rm b}^\top \left(\bar{\bm{\Gamma}} \text{sgn} \left( \m{P}_{\m{g}}(\m{z}-\m{z}^d) \right) - \m{f}_1\right)+ \m{1}_n\otimes \m{f}_2 \nonumber\\
    &= -\tilde{\m{R}}_{\rm b}^\top \left(\bar{\bm{\Gamma}} \text{sgn} \left( \m{P}_{\m{g}} \bar{\m{H}} (\m{p}-\m{p}^d) \right) - \m{f}_1\right) + \m{1}_n\otimes \m{f}_2. \label{eq:system_BOM}
\end{align}
Similarly, the vector form of \eqref{eq:Bearing_OnlyC2} is given as
\begin{align}
    \dot{\bm{\gamma}} = k_{\gamma} [\|\m{P}_{\m{g}_{1}} \m{g}_{1}^*\|_1,\ldots,\|\m{P}_{\m{g}_{m}} \m{g}_{m}^*\|_1]^\top, \label{eq:system_BOM1}
\end{align}
where $\bm{\gamma} = [\ldots,\gamma_{ij},\ldots]^\top = [\gamma_1,\ldots,\gamma_{m}]^\top$.

We have the following lemmas on \eqref{eq:system_BOM}.

\begin{lemma} \label{lem:4.1}
Consider the Problem \ref{problem:2} and suppose that Assumption~\ref{assumption:2} holds. Under the control law \eqref{eq:Bearing_OnlyC}, the following claims hold.
\begin{enumerate}
    \item[i.] The formation's centroid $\bar{\m{p}}=\frac{1}{n}(\m{1}_n^\top\otimes \m{I}_d)\m{p}$ moves at velocity $\m{f}_2$ and the formation's scale $s(t) \triangleq \frac{1}{\sqrt{n}}\|\m{r}(t)\|$ is time-invariant.
    \item[ii.] $\|\m{z}_{ij}\|=\|\m{p}_i - \m{p}_j\| \leq 2s\sqrt{n-1},~\forall i, j = 1,\ldots, n$, $\forall t\ge 0$.
    \item[iii.] Let $\m{p}^d,\m{p}^u \in \mb{R}^{dn}$ be two configurations that satisfy 
\begin{subequations}
\begin{align*}
    \m{p}^d &= \m{1}_n \otimes \bar{\m{p}} + \|\m{r}(0)\|\frac{\m{r}^*}{\|\m{r}^*\|}= \m{1}_n \otimes \bar{\m{p}} + \sqrt{n}s\frac{\m{r}^*}{\|\m{r}^*\|}, \\
    \m{p}^u &= \m{1}_n \otimes \bar{\m{p}} - \|\m{r}(0)\|\frac{\m{r}^*}{\|\m{r}^*\|}= \m{1}_n \otimes \bar{\m{p}} - \sqrt{n}s\frac{\m{r}^*}{\|\m{r}^*\|}.
\end{align*}    
\end{subequations}
Then, we have
\begin{enumerate}
\item Formation's centroid: $\bar{\m{p}}^d = \bar{\m{p}}^u= \bar{\m{p}}(t)=\m{f}_2$,
\item Formation's scale: $s = s^d= \frac{1}{\sqrt{n}}\|\m{r}^d\|= \frac{1}{\sqrt{n}}\|\m{r}(0)\| = s(0) =\frac{1}{\sqrt{n}}\|\m{r}^u\| = s^u$, where $\m{r}^d=\m{r}(\m{p}^d)$ and $\m{r}^u=\m{r}(\m{p}^u)$
\item Bearing vectors: $\frac{\m{p}_j^d - \m{p}_i^d}{\|\m{p}_j^d - \m{p}_i^d\|} = \m{g}_{ij}^*$, and $\frac{\m{p}_j^u - \m{p}_i^u}{\|\m{p}_j^u - \m{p}_i^u\|} = -\m{g}_{ij}^*$, $\forall i,j =1, \ldots, n$, $i \neq j$.
\end{enumerate}
\end{enumerate}
\end{lemma}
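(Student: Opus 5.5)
We will work directly with the closed-loop system under \eqref{eq:Bearing_OnlyC1}, using the fact that every term of the control input lies in ${\rm im}(\tilde{\m{R}}_{\rm b}^\top)$. We also use Assumption~\ref{assumption:2}, which gives $\m{d}\perp\m{p}$ and the decomposition $\m{d}=\tilde{\m{R}}_{\rm b}^\top\m{f}_1+\m{1}_n\otimes\m{f}_2$.

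\textbf{Claim i.} For the centroid we multiply the closed-loop dynamics by $\frac{1}{n}(\m{1}_n^\top\otimes\m{I}_d)$. The key fact is $\bar{\m{H}}(\m{1}_n\otimes\m{I}_d)=\m{0}$, so $(\m{1}_n^\top\otimes\m{I}_d)\tilde{\m{R}}_{\rm b}^\top=\m{0}$. This removes both the control term and the $\m{f}_1$ term, and $\dot{\bar{\m{p}}}=\m{f}_2$ follows exactly as in Lemma~\ref{lem:3.1}.

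For the scale we compute $\frac{d}{dt}\frac12\|\m{r}\|^2=\m{r}^\top\dot{\m{r}}=\m{r}^\top\dot{\m{p}}$, using $\m{r}\perp{\rm im}(\m{1}_n\otimes\m{I}_d)$. Since $\tilde{\m{R}}_{\rm b}\m{p}=\m{P}_{\m{g}}\bar{\m{H}}\m{p}=\m{P}_{\m{g}}\m{z}=\m{0}$, because each $\m{P}_{\m{g}_k}$ annihilates $\m{z}_k$, we get $\m{p}^\top\tilde{\m{R}}_{\rm b}^\top=\m{0}$. Hence the control contributes nothing, and this holds for every Filippov selection of the sign. We also have $\m{r}^\top(\m{1}_n\otimes\m{f}_2)=0$. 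For the $\tilde{\m{R}}_{\rm b}^\top\m{f}_1$ term we invoke $\m{d}\perp\m{p}$ together with $\m{d}\perp\m{1}_n\otimes\bar{\m{p}}$ up to the $\m{f}_2$ part; alternatively, it vanishes directly by the same annihilation identity. Therefore $\|\m{r}\|$, and hence $s$, is constant. The one delicate point is that everything must be phrased for Filippov solutions: $s$ is absolutely continuous, and its derivative is zero almost everywhere for every element of the set-valued map.

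\textbf{Claim ii.} We write $\m{z}_{ij}=\m{r}_j-\m{r}_i$, so $\|\m{z}_{ij}\|\le\|\m{r}_i\|+\|\m{r}_j\|$. The aim is to bound $\|\m{r}_i\|$ using the constraint $\sum_k\m{r}_k=\m{0}$. From $\m{r}_i=-\sum_{k\ne i}\m{r}_k$ and Cauchy--Schwarz, $\|\m{r}_i\|^2\le(n-1)\sum_{k\ne i}\|\m{r}_k\|^2=(n-1)(\|\m{r}\|^2-\|\m{r}_i\|^2)$. This gives $\|\m{r}_i\|\le\sqrt{(n-1)/n}\,\|\m{r}\|=\sqrt{n-1}\,s$. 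Summing the bounds for $i$ and $j$ yields $2s\sqrt{n-1}$, uniformly in $t$ because $s$ is constant by Claim i.

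\textbf{Claim iii.} We verify the three properties directly from the definitions of $\m{p}^d$ and $\m{p}^u$.
\begin{enumerate}
\item[(a)] Since $\m{r}^*\perp{\rm im}(\m{1}_n\otimes\m{I}_d)$, the centroid of $\m{p}^{d}$ and of $\m{p}^{u}$ equals $\bar{\m{p}}(t)$, whose velocity is $\m{f}_2$ by Claim i. We read the statement's ``$=\m{f}_2$'' as a statement about the derivative.
\item[(b)] We have $\m{r}^{d}=\|\m{r}(0)\|\hat{\m{r}}^*$ and $\m{r}^{u}=-\|\m{r}(0)\|\hat{\m{r}}^*$, so both norms equal $\|\m{r}(0)\|=\sqrt{n}s$ by Claim i.
\item[(c)] We compute $\m{p}^d_j-\m{p}^d_i=\frac{\sqrt{n}s}{\|\m{r}^*\|}(\m{p}^*_j-\m{p}^*_i)$, which is a positive multiple since $s>0$ when the initial configuration is non-collocated. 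Normalizing gives $\m{g}^*_{ij}$. The same computation for $\m{p}^u$ gives a negative multiple, hence $-\m{g}^*_{ij}$. This holds for all pairs $i\ne j$ because $\m{p}^*$ has no collocated points.
\end{enumerate}

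The main obstacle is the scale invariance in Claim i, specifically justifying it for Filippov solutions and confirming that the $\m{f}_1$ term is orthogonal to $\m{r}$. The orthogonality is where Assumption~\ref{assumption:2} enters.
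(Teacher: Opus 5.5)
Your proposal is correct and follows essentially the same route as the paper. You get the centroid by multiplying by $\frac{1}{n}(\m{1}_n^\top\otimes\m{I}_d)$. You get constant scale from $\m{r}\perp{\rm im}(\tilde{\m{R}}_{\rm b}^\top)$ and $\m{r}\perp{\rm im}(\m{1}_n\otimes\m{I}_d)$. Claim ii uses the same Cauchy--Schwarz bound on $\|\m{p}_i-\bar{\m{p}}\|$, and claim iii is checked directly. Your version is in places more explicit than the paper's: you state the annihilation identity $\tilde{\m{R}}_{\rm b}(\m{p})\m{p}=\m{0}$, add the Filippov remark, and compute the positive/negative-multiple bearings of $\m{p}^d,\m{p}^u$, which the paper only asserts.
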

\begin{IEEEproof} 
i. Consider the system~\eqref{eq:system_BOM}, we have
\begin{align*}
    \dot{\m{p}} &= \frac{1}{n}(\m{1}_n^\top\otimes\m{I}_d) \tilde{\m{R}}_{\rm b}\left(\bar{\bm{\Gamma}} {\rm sgn} \left( \m{P}_{\m{g}} \bar{\m{H}} (\m{p}-\m{p}^d) \right) - \m{f}_1\right) \\
    &\qquad + \frac{1}{n}(\m{1}_n^\top\otimes\m{I}_d)(\m{1}_n\otimes\m{f}_2)= \m{f}_2,\\
    \dot{s} &= \frac{\m{r}^\top\left(\dot{\m{p}} - \m{1}_n\otimes \dot{\bar{\m{p}}} \right)}{\sqrt{n}\|\m{r}\|}=0
\end{align*}
because $\m{r}\perp {\rm im}(\tilde{\m{R}}_{\rm b})$ and $\m{r}\perp {\rm im}(\m{1}_n\otimes \m{I}_d)$.

ii. This proof is similar to \cite{zhao2015bearing}[Corollary 2]. We have $\|\m{p}_i-\bar{\m{p}}\|^2 = \|\sum_{j=1,j\neq i}^n(\m{p}_j-\bar{\m{p}})\|^2 \leq (n-1)\sum_{j=1,j\ne i}^n\|\m{p}_j-\bar{\m{p}}\|^2$, which implies that $\|\m{p}_i-\bar{\m{p}}\|^2 \leq \frac{n-1}{n}\|\m{p}-\m{1}_n\otimes\bar{\m{p}}\|^2$. Thus, $\|\m{p}_i-\bar{\m{p}}\| \leq \sqrt{n-1}s$.

Since $\|\m{p}_i-\m{p}_j\|\leq \|\m{p}_i - \bar{\m{p}}\| + \|\m{p}_j - \bar{\m{p}}\|$, we have $\|\m{p}_i-\m{p}_j\|\leq \|\m{p}_i - \bar{\m{p}}\| + \|\m{p}_j - \bar{\m{p}}\| \leq 2\sqrt{n-1}s$.

iii. It is not hard to verify that 
\begin{align*}
    \bar{\m{p}}^d &=\frac{1}{n}(\m{1}_n^\top\otimes\m{I}_d)\m{p}^d=\bar{\m{p}}(t),\\
    \m{r}^d &= \m{p}^d - \m{1}_n\otimes \bar{\m{p}}^d = \|\m{r}(0)\|\frac{\m{r}^*}{\|\m{r}^*\|},\\
    s^d &=\|\m{r}^d\| = \left|\left|\|\m{r}(0)\| \frac{\m{r}^*}{\|\m{r}^*\|} \right|\right| = \|\m{r}(0)\|=s(0),
\end{align*}
and a similar formulas can be derived for $\bar{\m{p}}^u$ and $s^u$.
\end{IEEEproof}

We have the following lemma: 
\begin{lemma} \label{lem:4.2}
Consider Problem \ref{problem:2} and suppose that Assumption~\ref{assumption:2} holds. Under the control law \eqref{eq:Bearing_OnlyC}, if $\gamma_{ij}(0)>\beta_5, \forall (i,j) \in \mc{E}$, the desired configuration $\m{p}=\m{p}^d$ is finite time stable.
\end{lemma}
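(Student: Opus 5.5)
We will establish finite-time stability of $\m{p}=\m{p}^d$ for \eqref{eq:system_BOM} with a non-smooth Lyapunov argument in the Filippov framework, as in the proof of Theorem~\ref{thm:1}(i). The difference is that the projections are now $\m{P}_{\m{g}}$ at the current configuration instead of $\m{P}_{\m{g}^*}$. Set $\bm{\delta}=\m{p}-\m{p}^d$. By Lemma~\ref{lem:4.1}(i),(iii), $\m{p}^d$ has the same centroid as $\m{p}$ and moves with it, so $\dot{\m{p}}^d=\m{1}_n\otimes\m{f}_2$. Hence
\[
\dot{\bm{\delta}}=-\tilde{\m{R}}_{\rm b}^\top\left(\bar{\bm{\Gamma}}\,{\rm sgn}(\m{P}_{\m{g}}\bar{\m{H}}\bm{\delta})-\m{f}_1\right).
\]
We use the Lyapunov candidate $V=\frac{1}{2}\|\bm{\delta}\|^2$. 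Since $\gamma_{ij}$ is non-decreasing, $\gamma_{ij}(t)\ge\gamma_{ij}(0)>\beta_5$ for all $t$, and we define $\zeta=\min_k(\gamma_k(0)-\beta_5)>0$.

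For the derivative, $\bm{\delta}^\top\tilde{\m{R}}_{\rm b}^\top=(\m{P}_{\m{g}}\bar{\m{H}}\bm{\delta})^\top$. Using the set-valued map ${\rm K}[{\rm sgn}]$ we get
\[
\dot V\le-\sum_k(\gamma_k-\|\m{f}_{1k}\|_\infty)\|\m{P}_{\m{g}_k}(\m{z}_k-\m{z}_k^d)\|_1\le-\zeta\|\m{P}_{\m{g}}\bar{\m{H}}\bm{\delta}\|.
\]
Since $\m{P}_{\m{g}_k}\m{z}_k=\m{0}$, the term $\m{P}_{\m{g}_k}(\m{z}_k-\m{z}_k^d)$ equals $-\m{P}_{\m{g}_k}\m{z}_k^d$. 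We therefore need a lower bound of the form
\[
\sum_k\|\m{P}_{\m{g}_k}\m{z}_k^d\|\ge c\,\|\bm{\delta}\|
\]
on the relevant set. The main obstacle is that $\m{P}_{\m{g}}$ is state dependent, so $\lambda_{d+1}(\m{L}_{\rm b}^*)$ cannot be used directly.

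To obtain the bound, we use the identity $\|\m{P}_{\m{g}_k}\m{z}_k^d\|=\|\m{P}_{\m{g}_k^*}\m{z}_k\|\,\|\m{z}_k^d\|/\|\m{z}_k\|$. Both sides equal $\|\m{z}_k\|\|\m{z}^d_k\|\sin\theta_k$, where $\theta_k$ is the angle between $\m{g}_k$ and $\m{g}_k^*$. Lemma~\ref{lem:4.1}(ii) gives $\|\m{z}_k\|\le 2s\sqrt{n-1}$, and $s$ is constant. The distances $\|\m{z}_k^d\|=\sqrt{n}\,s\,\|\m{z}_k^*\|/\|\m{r}^*\|$ are also constant and positive. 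Hence
\[
\|\m{P}_{\m{g}_k}\m{z}_k^d\|\ge c_1\|\m{P}_{\m{g}_k^*}\m{z}_k\|,\qquad c_1=\min_k\frac{\|\m{z}_k^d\|}{2s\sqrt{n-1}}>0.
\]
Next, $\m{P}_{\m{g}_k^*}\m{z}_k=\m{P}_{\m{g}_k^*}(\bar{\m{H}}\bm{\delta})_k$. We then need $\bm{\delta}\perp\ker(\m{L}_{\rm b}^*)$, or at least a bound of $\|\tilde{\m{R}}_{\rm b}^*\bm{\delta}\|$ below by a multiple of $\|\bm{\delta}\|$. The component of $\bm{\delta}$ along $\m{1}_n\otimes\m{I}_d$ is zero because the centroids coincide. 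Its component along $\hat{\m{r}}^*$ is $\hat{\m{r}}^{*\top}\m{r}-\|\m{r}(0)\|$. The orthogonal part of $\bm{\delta}$ is the part of $\m{r}$ orthogonal to $\hat{\m{r}}^*$, and since $\|\m{r}\|$ is constant this part controls the $\hat{\m{r}}^*$-component. Consequently $\|\bm{\delta}\|\le c_2\|\bm{\delta}_\perp\|$, which together with $\lambda_{d+1}(\m{L}_{\rm b}^*)$ gives $\dot V\le-\rho V^{1/2}$ for some $\rho>0$.

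This last step is the delicate one. The reverse inequality fails near the antipodal configuration $\m{p}^u$, where $\bm{\delta}_\perp=\m{0}$ but $\bm{\delta}\neq\m{0}$. I would therefore first show that $\m{p}^d$ is finite-time stable on the region where $\hat{\m{r}}^{*\top}\m{r}>0$. On that region, the identity $\|\m{r}\|^2-(\hat{\m{r}}^{*\top}\m{r})^2=\|\bm{\delta}_\perp\|^2$ gives $\|\m{r}(0)\|-\hat{\m{r}}^{*\top}\m{r}\le\|\bm{\delta}_\perp\|$, so $c_2=\sqrt2$. I would then argue that this region is forward invariant because $V$ is non-increasing: $V<\|\m{r}(0)\|^2$ is equivalent to $\hat{\m{r}}^{*\top}\m{r}>0$. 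Finally, the comparison lemma applied to $\dot V\le-\rho V^{1/2}$ gives convergence in finite time $T\le 2\sqrt{V(0)}/\rho$, exactly as in Theorem~\ref{thm:1}(i).
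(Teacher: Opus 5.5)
Your proof is correct. It shares the paper's core steps: the same $V=\frac{1}{2}\|\bm{\delta}\|^2$, the bound $\dot V\le-\zeta\|\m{P}_{\m{g}}\m{z}^d\|$, and the $\sin\theta_k$ identity combined with Lemma~\ref{lem:4.1}(ii). Where you differ is in controlling the $\hat{\m{r}}^*$-component of $\bm{\delta}$, which lies in $\ker(\m{L}_{\rm b}^*)$.

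The paper introduces the angle $\alpha(t)$ between $\bm{\delta}$ and $-\m{r}^*$, asserts that it is non-decreasing, and gets $\dot V\le-\varrho V^{1/2}$ with a trajectory-dependent $\varrho\propto\sin\alpha(t_1)$. To handle $\m{p}^u$ it adds a Chetaev instability argument and the assertion that the disturbance drives the state off $\m{p}^u$. You instead work on the forward-invariant set $\{V<\|\m{r}(0)\|^2\}=\{\hat{\m{r}}^{*\top}\m{r}>0\}$ with the uniform constant $c_2=\sqrt{2}$. This avoids both the Chetaev step and the unproved claim about leaving $\m{p}^u$. The cost is that you certify only this hemisphere as region of attraction, which is enough for finite-time stability as stated.

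The two routes are closer than they look. With $R=\|\m{r}(0)\|$ and $a=\hat{\m{r}}^{*\top}\m{r}$, one has $V=R(R-a)$ and $\|\bm{\delta}_\perp\|^2/\|\bm{\delta}\|^2=(R+a)/(2R)=1-V/(2R^2)=\sin^2\alpha$. So the paper's monotonicity of $\alpha$ is exactly your monotonicity of $V$. Running your argument on any sublevel set $\{V\le c\}$ with $c<2R^2$ gives $\|\bm{\delta}_\perp\|^2\ge(1-c/(2R^2))\|\bm{\delta}\|^2$. That yields finite-time convergence from every initial condition other than $\m{p}^u$, which recovers the paper's almost-global conclusion without Chetaev.

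Minor points:
\begin{itemize}
\item The common value in your identity is $\|\m{z}_k^d\|\sin\theta_k$, not $\|\m{z}_k\|\|\m{z}_k^d\|\sin\theta_k$.
\item Since $\m{L}_{\rm b}^*$ has $d+1$ zero eigenvalues, the constant you need is $\lambda_{d+2}(\m{L}_{\rm b}^*)$.
\item As in the paper, the bearing-only law needs $\m{z}_k\neq\m{0}_d$ on edges along the trajectory. You can secure this by also intersecting with the set $\sqrt{2}\|\bm{\delta}\|<\min_k\|\m{z}_k^d\|$, which is forward invariant for the same reason.
\end{itemize}
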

\begin{IEEEproof}
Let $\bm{\delta} = \m{p} - \m{p}^d = \m{r} - \m{r}^d$, it follows from \eqref{eq:system_BOM} that
\begin{align}
  \dot{\bm{\delta}}  \in -\tilde{\m{R}}_b^\top \left(\bar{\bm{\Gamma}} \text{K[sgn]} \left( \m{P}_{\m{g}} \bar{\m{H}} \bm{\delta} \right) - \m{f}_1\right). \label{eq:system_BOM2}
\end{align}
The Lyapunov function $V = \frac{1}{2}\|\bm{\delta}\|^2$ is continuously differentiable, positive definite and radially unbounded, and has
\begin{align}
    \dot{V} \in^{\text{a.e}} \dot{\tilde{V}} &= -\bm{\delta}^\top \bar{\m{H}}^\top \m{P}_{\m{g}} \left(\tilde{\bm{\Gamma}} \text{K[sgn]} \left(\m{P}_{\m{g}}\bar{\m{H}} \bm{\delta}\right) - \m{f}_1 \right) \nonumber\\
    &= - \sum_{k=1}^m \left(\gamma_{k} \|\m{P}_{\m{g}_k}\m{z}^d\|_1 - (\m{z}^d_k)^\top \m{P}_{\m{g}_k} \m{f}_{1k} \right) \nonumber \\
    &\leq - \sum_{k=1}^m (\gamma_{k}(0) - \|\m{f}_{1k}\|_{\infty}) \|\m{P}_{\m{g}_k}\m{z}_k^d\|_1.
\end{align}
Let $\zeta = \min_{k\in \{1,\ldots,m\}} (\gamma_{k}(0) - \beta_5)>0$, then we have
\begin{align}
\dot{V}  \leq  -\zeta\sum_{k=1}^m \|\m{P}_{\m{g}_k}\m{z}^d_k\|_1 
\leq -\zeta\|\m{P}_{\m{g}}\m{z}^d\|_1 \leq -\zeta\|\m{P}_{\m{g}}\m{z}^d\|_2. \label{eq:dotV_BOM}
\end{align}
By noting that $\m{z}^d_k = \|\m{z}^d_k\|\m{g}_{k}^*$, we can write
\begin{align}
 \|\m{P}_{\m{g}}\m{z}^d\|_2^2&=(\m{z}^d)^\top \m{P}_{\m{g}}\m{z}^d \nonumber\\
 &=\sum_{k=1}^m (\m{z}^d_k)^\top \m{P}_{\m{g}_k} \m{z}^d_k \nonumber \\
 &=\sum_{k=1}^m \|\m{z}^d_k\|^2 (\m{g}^*_k)^\top \m{P}_{\m{g}_k} \m{g}^*_k. \label{eq:gPg}
\end{align}
Moreover, it follows from $(\m{g}^*_{ij})^\top \m{P}_{\m{g}_{ij}}\m{g}^*_{ij} = (\m{g}^*_{ij})^\top (\m{I}_d - \m{g}_{ij}\m{g}_{ij}^\top)\m{g}^*_{ij} = 1-(\m{g}_{ij}^\top)\m{g}^*_{ij})^2 = (\m{g}_{ij})^\top (\m{I}_d - \m{g}_{ij}^*\m{g}_{ij}^{*\top})\m{g}_{ij}=(\m{g}_{ij})^\top \m{P}_{\m{g}_{ij}^*}\m{g}_{ij}$, equation \eqref{eq:dotV_BOM} and inequality \eqref{eq:gPg}  that 
\begin{align}
   \dot{V} &\leq -\zeta \left(\sum_{k=1}^m \frac{\|\m{z}_k^d\|^2}{\|\m{z}_k\|^2} \m{z}_k^\top \m{P}_{\m{g}_k^*} \m{z}_k \right)^{\frac{1}{2}}\nonumber\\
    &\leq -\underbrace{\zeta \left(\frac{\min_{k}\|\m{z}_k^d\|^2}{4(n-1)s^2}\right)^{\frac{1}{2}}}_{\triangleq \varsigma} \left(\sum_{k=1}^m \m{z}_k^\top \m{P}_{\m{g}_k^*} \m{z}_k\right)^{\frac{1}{2}} \nonumber\\
    &\leq -\varsigma \left(\m{z}^\top \m{P}_{\m{g}^*} \m{z} \right)^{\frac{1}{2}} \nonumber\\
    &\leq -\varsigma \left(\bm{\delta}^\top\bar{\m{H}}^\top \m{P}_{\m{g}^*} \bar{\m{H}} \bm{\delta} \right)^{\frac{1}{2}}\nonumber\\
    &\leq -\varsigma \left(\bm{\delta}^\top\m{L}_{\rm b}^*\bm{\delta}\right)^{\frac{1}{2}} \leq 0.
\end{align}
Furthermore, $\dot{V} = 0$ if and only if $\bm{\delta} \in \text{ker}(\m{L}_b(\m{p}^*))$. It follows from Lemma~\ref{lem:4.1} that $\m{p}= \m{p}^d$ or  $\m{p}=\m{p}^u$. 

To establish instability of $\m{p}^u$ under the proposed control law \eqref{eq:Bearing_OnlyC}, we consider the Lyapunov function $W = \frac{1}{2}\|\bm{\delta}'\|^2$, where $\bm{\delta}' \triangleq\m{p} - \m{p}^u$ and the $\bm{\delta}'$-dynamics. Since ${\rm sgn}(\m{P}_{\m{g}}\bar{\m{H}} \bm{\delta}')=-{\rm sgn}(\m{P}_{\m{g}}\bar{\m{H}} \m{p}^u)=-{\rm sgn}(\m{P}_{\m{g}}\m{z}^u)={\rm sgn}(\m{P}_{\m{g}}\m{z}^d)={\rm sgn}(\m{P}_{\m{g}}\m{g}^*)$, we can write
\begin{align}
    \dot{W} &= -(\bm{\delta}')^\top\bar{\m{H}}^\top \m{P}_{\m{g}} \left(-\tilde{\bm{\Gamma}} \text{K[sgn]} \left(\m{P}_{\m{g}}\bar{\m{H}} \bm{\delta}'\right) - \m{f}_{1} \right) \nonumber\\
    &= \sum_{k=1}^m (\gamma_{ij} \|\m{P}_{\m{g}_k}\m{z}^u_k\|_1 - \m{f}_{1k}^\top\m{P}_{\m{g}_k}\m{z}^u_k) \nonumber\\
    &\geq \sum_{k=1}^m (\gamma_{ij}-\beta_5) \|\m{P}_{\m{g}_k}\m{z}^u_k\|_1 \geq 0. 
\end{align}
For any open ball centered at $\m{0}_{dn}$ which does not contain $\m{0}^d$ and $\bm{\delta}'=\m{p}^d-\m{p}^u$, $\dot{W}>0$. Based on Chetaev's instability theorem \cite{Khalil2002nonlinear}, we conclude that $\m{p}^u$ is unstable.

Due to the existence of disturbance, the system cannot stay identically at $\m{p}^u$. Thus, there exists a finite time $t_1\ge 0$ such that $\m{p}(t_1) \neq \m{p}^u$. Thus, for $t \ge t_1$, $\dot{V}$ is negative definite, which implies that $\m{p}(t) \to\m{p}^d$ asymptotically. 

Decomposing $\bm{\delta} = \bm{\delta}^{||}+ \bm{\delta}^{\perp}$, where $\bm{\delta}^{||} \in \text{ker}(\m{L}_{\rm b}^*)$, and $\bm{\delta}^{\perp} \in \text{im}(\m{L}_{\rm b}^*)$ as illustrated in Fig.~\ref{fig:placeholder}. Since $\|\bm{\delta} + \m{r}^d\|= \|\m{r}^d\|=\sqrt{n} s(0)$ is time-invariant, let $\alpha(t) \in [0, \frac{\pi}{2})$ denote the angle between $\bm{\delta}(t)$ and $-\m{r}^*$, for $t\geq t_1$, it follows that 
\begin{align*}
    \bm{\delta}^\top \m{L}_b^* \bm{\delta} &= (\bm{\delta}^\perp)^\top \m{L}_{\rm b}^* (\bm{\delta}^\perp) \\
    & \geq \lambda_{d+2}(\m{L}_{\rm b}^*) \|\bm{\delta}^\perp\|^2 \\
    & \geq \lambda_{d+2}(\m{L}_{\rm b}^*)\sin^2\alpha \|\bm{\delta}\|^2.
\end{align*}
As $\m{p}(t_1) \neq \m{p}^u$, and $\m{p}$ asymptotically converges to $\m{p}^d$, for $t\ge t_1$, we have $\alpha$ is non-decreasing. Thus,
\begin{align}
    \dot{V} \leq -\underbrace{\varsigma \sin(\alpha(t_1))\sqrt{ 2\lambda_{d+2}(\m{L}_{\rm b}^*)}}_{\triangleq \varrho } \frac{\|\bm{\delta}\|}{\sqrt{2}} \leq -\varrho  V^{\frac{1}{2}},
\end{align}
which implies that $V(t) \to 0$ in finite time by a similar argument as in the proof of Theorem~\ref{thm:1}.
\end{IEEEproof}

The main result of this section is given in the following theorem. 
\begin{theorem} \label{thm:2}
Consider the Problem \ref{problem:2} and suppose that Assumption~\ref{assumption:2} holds. Under the adaptive bearing-only control law~\eqref{eq:Bearing_OnlyC}, the following claims hold.
\begin{enumerate}
    \item[i.] $\m{p}(t) \to \m{p}^d$ as $t \to + \infty$, 
    \item[ii.] There exists a constant vector $\bm{\gamma}^*$ such that $\bm{\gamma} \to \bm{\gamma}^*$, as $t \to + \infty$,
    \item[iii.] Additionally, if $\gamma_i^*>\beta_5,~\forall i$, and there exists a finite time $T$ such that $|\gamma_i(T) - \gamma_i^*|< \min_{i=1,\ldots,m}|\gamma_i(T) - \beta_5|,~\forall i=1,\ldots,n,$ then $\m{p}(t) \to \m{p}^d$ in finite time.
\end{enumerate}
\end{theorem}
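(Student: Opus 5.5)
The plan mirrors part ii of Theorem~\ref{thm:1}, now using the bearing-only error dynamics \eqref{eq:system_BOM2} and the estimates in the proof of Lemma~\ref{lem:4.2}. We work with $\bm{\delta}=\m{p}-\m{p}^d$ and the composite Lyapunov function
\begin{align*}
V=\tfrac12\|\bm{\delta}\|^2+\tfrac{1}{2k_\gamma}\|\bm{\gamma}-\gamma^*\m{1}_m\|^2,
\end{align*}
with an arbitrary constant $\gamma^*>\beta_5$. By Lemma~\ref{lem:4.1}, $\bm{\delta}$ stays on a sphere of radius $\sqrt{n}s(0)$, and all inter-agent distances are bounded by $2s\sqrt{n-1}$.

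First we compute the set-valued derivative along Filippov solutions. As in Lemma~\ref{lem:4.2}, the $\bm{\delta}$ part contributes $-\sum_k(\gamma_k\|\m{P}_{\m{g}_k}\m{z}_k^d\|_1-(\m{z}_k^d)^\top\m{P}_{\m{g}_k}\m{f}_{1k})$. Since $\m{P}_{\m{g}_k}\m{z}_k^d=\|\m{z}_k^d\|\m{P}_{\m{g}_k}\m{g}_k^*$ and $\|\m{z}_k^d\|$ is constant (the scale is invariant), the adaptive term contributes $\sum_k(\gamma_k-\gamma^*)\|\m{P}_{\m{g}_k}\m{g}^*_k\|_1$.

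These two terms do not cancel exactly because of the factor $\|\m{z}_k^d\|$. We will fix this by weighting the adaptive part by $\frac{1}{2k_\gamma}\sum_k\|\m{z}_k^d\|(\gamma_k-\gamma^*)^2$, which is legitimate because these weights are constants. With this weighting we obtain
\begin{align*}
\dot V\le-(\gamma^*-\beta_5)\sum_k\|\m{P}_{\m{g}_k}\m{z}_k^d\|_1\le-\varsigma'\left(\bm{\delta}^\top\m{L}_{\rm b}^*\bm{\delta}\right)^{1/2}\le 0,
\end{align*}
where the second inequality follows the chain in Lemma~\ref{lem:4.2}. Hence $\bm{\delta}$ and $\bm{\gamma}$ are bounded. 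Since each $\gamma_k$ is nondecreasing and bounded, it converges to a finite limit $\gamma_k^*$, which proves claim ii.

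For claim i, $V$ is nonincreasing and bounded below, so it has a limit. The signal $\sum_k\|\m{P}_{\m{g}_k}\m{g}^*_k\|_1$ is uniformly continuous: $\dot{\m{p}}$ is bounded because the gains and the disturbance are bounded, and the bearings are Lipschitz on the set where distances are bounded away from zero. Barbalat's lemma then gives $\m{P}_{\m{g}_k}\m{g}_k^*\to\m{0}$ for all $k$. This means $\m{p}$ approaches a configuration bearing-equivalent to $\m{p}^*$ with centroid $\bar{\m{p}}$ and scale $s(0)$, i.e. either $\m{p}^d$ or $\m{p}^u$. We exclude $\m{p}^u$ with the Chetaev argument of Lemma~\ref{lem:4.2} applied to $W=\frac12\|\m{p}-\m{p}^u\|^2$ once $\gamma_k$ exceeds $\beta_5$. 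If some $\gamma_k$ never exceeds $\beta_5$, we instead argue that the nonzero disturbance prevents the trajectory from remaining at $\m{p}^u$, as in the paper.

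The main obstacle is collision avoidance, i.e. keeping $\|\m{z}_k\|$ bounded below so that the bearings remain well defined and uniformly continuous. We plan to use $\|\bm{\delta}(t)\|\le\sqrt{2V(0)}$ and a condition analogous to Corollary~\ref{cor:CA1}, which would be added as a standing requirement. Excluding $\m{p}^u$ is the second delicate point.

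For claim iii, the hypothesis $|\gamma_i(T)-\gamma_i^*|<\min_i|\gamma_i(T)-\beta_5|$, combined with monotonicity, gives $\gamma_i(T)>\beta_5$ for all $i$. We then apply Lemma~\ref{lem:4.2} with initial time $T$ in place of $0$; its hypotheses hold because the system is time-invariant apart from the disturbance. This yields finite-time convergence to $\m{p}^d$.
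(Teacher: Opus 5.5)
Your plan follows the paper's proof almost line for line. The shared steps are:
\begin{itemize}
\item the $\|\m{z}_k^d\|$-weighted Lyapunov function and the bound $\dot V\le-(\gamma^*-\beta_5)\sum_k\|\m{P}_{\m{g}_k}\m{z}_k^d\|_1$;
\item Barbalat, and reduction of the limit set to $\{\m{p}^d,\m{p}^u\}$;
\item monotone bounded gains for ii;
\item Lemma~\ref{lem:4.2} restarted at $T$ for iii. Your deduction $\gamma_i(T)>\beta_5$ from the hypothesis is correct.
\end{itemize}
You therefore also inherit the paper's genuine gap: the exclusion of $\m{p}^u$.

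Chetaev's theorem only says that $\m{p}^u$ is unstable. It does not say that your trajectory fails to converge to it. What does work, once $\gamma_k(t)>\beta_5$ for all $k$ and all $t\ge t_0$, is monotonicity. Since $\dot W\ge\sum_k(\gamma_k-\beta_5)\|\m{P}_{\m{g}_k}\m{z}_k^d\|_1\ge0$, we get $W(t)\ge W(t_0)>0$ whenever $\m{p}(t_0)\neq\m{p}^u(t_0)$.

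The remaining case cannot be handled by ``the disturbance prevents the trajectory from remaining at $\m{p}^u$''. At $\m{p}^u$ every $\m{P}_{\m{g}_k}\m{g}_k^*$ vanishes, so $\dot{\bm{\gamma}}=\m{0}$. Moreover ${\rm K}[{\rm sgn}](\m{0})=[-1,1]^{dm}$ contains $\bar{\bm{\Gamma}}^{-1}\m{f}_1$ because $\gamma_k>\beta_5$. Hence $\bm{\delta}(t)\equiv\m{p}^u-\m{p}^d$ is a Filippov solution of \eqref{eq:system_BOM2}. With $\m{p}(0)=\m{p}^u(0)$ and $\gamma_k(0)>\beta_5$, claim i fails as stated, and only an almost-global version can be proved.

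Your fallback for the case where some $\gamma_k$ never exceeds $\beta_5$ contains no argument. There $W$ need not be monotone, and the matched disturbance may dominate that gain. Also, ``cannot remain at $\m{p}^u$'' concerns staying there, not asymptotic approach, along which $\dot{\bm{\gamma}}$ can remain integrable with $\gamma_k\le\beta_5$. To close this step you must either restrict claim i (for example, require $\gamma_k(0)>\beta_5$ and $\m{p}(0)\neq\m{p}^u(0)$) or supply a new argument.

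Conversely, what you call the main obstacle is not one for the convergence argument. The chain in Lemma~\ref{lem:4.2} uses only the upper bound $\|\m{z}_k\|\le 2s\sqrt{n-1}$. So $\dot V\le-\varsigma'(\bm{\delta}^\top\m{L}_{\rm b}^*\bm{\delta})^{1/2}$ with $\varsigma'$ depending on $\min_k\|\m{z}_k^d\|$ but on no lower bound on actual inter-agent distances.

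The map $t\mapsto(\bm{\delta}^\top\m{L}_{\rm b}^*\bm{\delta})^{1/2}$ is integrable on $[0,\infty)$. It is also Lipschitz, because $\dot{\bm{\delta}}$ is bounded: the gains are bounded, $\|\m{P}_{\m{g}_k}\|\le1$, and $\m{d}$ is bounded. Barbalat then gives $\bm{\delta}^\perp\to\m{0}$. Note that it is $\m{r}=\bm{\delta}+\m{r}^d$, not $\bm{\delta}$, that lies on the sphere of radius $\sqrt{n}s(0)$ about the origin. This sphere constraint, together with continuity, forces $\m{p}-\m{p}^d\to\m{0}$ or $\m{p}-\m{p}^u\to\m{0}$.

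It is applying Barbalat to the bearing errors, as you do, that creates the need for uniformly separated agents. A positive lower bound on $\|\m{z}_k\|$ is needed only for well-posedness of the closed loop, which the theorem and the paper assume tacitly. Adding a Corollary-type condition as a standing hypothesis changes the statement without being what makes the proof work.
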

\begin{IEEEproof}
i. Consider the Lyapunov function $V = \frac{1}{2} \|\bm{\delta}\|^2 + \frac{1}{2k} \sum_{k=1}^m\|\m{z}_k^d\| \|{\gamma_k} - \beta_6 \m{1}_n\|^2$, for some $\beta_6>\beta_5$. Similar to Lemma~\ref{lem:4.2}, we have
\begin{align}
\dot{V} \in^{\text{a.e}} \dot{\tilde{V}} &= -\bm{\delta}^\top \bar{\m{H}}^\top \m{P}_{\m{g}} \left(\tilde{\bm{\Gamma}} \text{K[sgn]} \left(\m{P}_{\m{g}} \bar{\m{H}} \bm{\delta}\right) - \m{f}_1 \right) \nonumber\\
    &\qquad\qquad + \sum_{k=1}^m \|\m{z}^d_k\|(\gamma_k - \beta_5) \|\m{P}_{\m{g}_k} \m{g}_k^* \|_1 \nonumber\\
    &= - \sum_{k=1}^m \left(\gamma_{k} \|\m{P}_{\m{g}_k}\m{z}^d_k\|_1 - (\m{z}^d_k)^\top \m{P}_{\m{g}_k} \m{f}_{1k} \right)  \nonumber \\
    &\qquad\qquad + \sum_{k=1}^m (\gamma_k - \beta_6) \|\m{P}_{\m{g}_k} \m{z}_k^d \|_1 \label{eq:ddotV_BOM} \\
    &\leq - \sum_{k=1}^m \left(\beta_{6}  - \|\m{f}_{1k}\|_{\infty} \right) \|\m{P}_{\m{g}_k}\m{z}_k^d\|_1.\nonumber
\end{align}
For $\beta_6 > \beta_5 \ge  \sup_{t\ge 0}\|\m{f}_1\|_{\infty}$, and thus $\zeta = \beta_6 - \beta_5>0$. We obtain $\dot{V} \leq - \zeta \|\m{P}_{\m{g}_k}\m{z}_k^d\|_1$. Similar to the proof of Theorem~\ref{thm:2}, we can eventually prove that $\dot{V}\leq - \varphi \left(\bm{\delta}^\top\m{L}_{\rm b}^*\bm{\delta}\right)^{\frac{1}{2}} \leq 0$, for some $\varphi>0$. It follows that $\bm{\delta}$ and $\bm{\gamma}$ are uniformly bounded and there exists finite  $\lim_{t\to+\infty}V(t) \leq V(0)$. Due to the uniformly continuity of $\m{f}$, it follows from \eqref{eq:ddotV_BOM} that $\dot{V}$ is uniformly continuous. As assumptions of the Barbalat's lemma is satisfied, we conclude that $\dot{V}\to 0$. However, $\dot{V}=0$ if and only if $\m{p}=\m{p}^d$ or $\m{p}=\m{p}^u$. The instability of $\m{p}^u$ can be similarly proved as in Theorem~\ref{thm:2}, using the function $W=\frac{1}{2} \|\bm{\delta}'\|^2 + \frac{1}{2k_{\gamma}} \sum_{k=1}^m\|\m{z}_k^d\| \|{\gamma_k} - \beta_6 \m{1}_n\|^2$. Therefore, $\m{p}\to\m{p}^d$ as $t\to+\infty$.

ii. This claim follows from the fact that $\gamma_k,k=1,\ldots,m,$ are non-decreasing and upper bounded.

iii. If the assumptions of this claim hold, then $\gamma(T)\ge \beta_5$. The claim follows from Theorem~\ref{thm:2}.
\end{IEEEproof}

The collision avoidance condition in Corollary~\ref{cor:CA1} still holds, and we can prove the following corollary. 
\begin{corollary}[Collision avoidance 2]\label{cor:CA2} Under the control law \eqref{eq:Bearing_OnlyC}, if 
\begin{align} \label{eq:collisionAvoidance2}
\inf_{t\geq 0}\|\m{p}_i^d - \m{p}_j^d\| -  \sqrt{2nV(0)}>0,\; \forall i,j\in\mc{V},\; i\neq j,
\end{align}
no collision happens between $n$ agents for all time $t\geq 0$.
\end{corollary}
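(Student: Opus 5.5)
The argument follows the proof of Corollary~\ref{cor:CA1}, with the Lyapunov estimate of Theorem~\ref{thm:2} in place of that of Theorem~\ref{thm:1}. Fix an arbitrary pair $i\neq j$ and write $\bm{\delta}=\m{p}-\m{p}^d$, where $\m{p}^d$ is the configuration of Lemma~\ref{lem:4.1}(iii). The triangle inequality and Cauchy--Schwarz give
\begin{align*}
\|\m{p}_i^d-\m{p}_j^d\| &\leq \|\m{p}_i-\m{p}_i^d\|+\|\m{p}_j-\m{p}_j^d\|+\|\m{p}_i-\m{p}_j\| \\
&\leq \sqrt{n}\|\bm{\delta}\|+\|\m{p}_i-\m{p}_j\|.
\end{align*}
Hence $\|\m{p}_i(t)-\m{p}_j(t)\|\geq \inf_{t\geq 0}\|\m{p}_i^d-\m{p}_j^d\|-\sqrt{n}\|\bm{\delta}(t)\|$. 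It therefore suffices to show that $\|\bm{\delta}(t)\|\leq\sqrt{2V(0)}$ for all $t\geq 0$, where $V$ is the Lyapunov function from the proof of Theorem~\ref{thm:2}.

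For that bound, take $V=\frac{1}{2}\|\bm{\delta}\|^2+\frac{1}{2k_\gamma}\sum_k\|\m{z}_k^d\|(\gamma_k-\beta_6)^2$, which is the function used in Theorem~\ref{thm:2}(i). The adaptive term is nonnegative, so $\frac{1}{2}\|\bm{\delta}\|^2\leq V$. The proof of Theorem~\ref{thm:2} shows $\dot V\leq 0$ almost everywhere in the Filippov sense, so $V(t)\leq V(0)$. Together these give $\|\bm{\delta}(t)\|\leq\sqrt{2V(t)}\leq\sqrt{2V(0)}$, and then $\|\m{p}_i-\m{p}_j\|\geq\inf_{t}\|\m{p}_i^d-\m{p}_j^d\|-\sqrt{2nV(0)}>0$ by \eqref{eq:collisionAvoidance2}.

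The main obstacle is the weights $\|\m{z}_k^d\|$ inside $V$. The monotonicity $V(t)\le V(0)$ needs them constant in time. By Lemma~\ref{lem:4.1}(i) and (iii), $\m{r}^d=\sqrt{n}s(0)\hat{\m{r}}^*$ is constant because the scale is invariant under Assumption~\ref{assumption:2}, and $\m{z}^d=\bar{\m{H}}\m{p}^d=\bar{\m{H}}\m{r}^d$. So the weights are indeed constant, and I would state this explicitly.

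A second subtlety is circular. The bounds on $\dot V$ in Lemma~\ref{lem:4.2} and Theorem~\ref{thm:2} use bearings $\m{g}_k$, which are defined only while no collision has occurred. I would handle this by contradiction. Let $t^\star$ be the first collision time. On $[0,t^\star)$ the argument above holds, which gives a uniform positive lower bound on $\|\m{p}_i-\m{p}_j\|$. Since $\m{p}$ is absolutely continuous, that bound persists up to $t^\star$, contradicting a collision at $t^\star$.
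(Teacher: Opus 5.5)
Your proposal is correct and follows essentially the paper's route. The paper gives no separate proof; it only says the argument of Corollary~\ref{cor:CA1} carries over, namely the bound $\|\m{p}_i-\m{p}_j\|\ge\|\m{p}_i^d-\m{p}_j^d\|-\sqrt{n}\|\bm{\delta}\|$ combined with $\|\bm{\delta}(t)\|\le\sqrt{2V(t)}\le\sqrt{2V(0)}$ for the Lyapunov function of Theorem~\ref{thm:2}, which is exactly what you do. Your two extra points are valid details the paper leaves implicit: the weights $\|\m{z}_k^d\|$ are constant because $\m{z}^d=\bar{\m{H}}\m{r}^d$ and $\|\m{r}\|$ is invariant under Assumption~\ref{assumption:2}, which the corollary tacitly needs, and the first-collision-time argument keeps the bearings well defined.
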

\begin{figure}
\centering
\includegraphics[width=.75\linewidth]{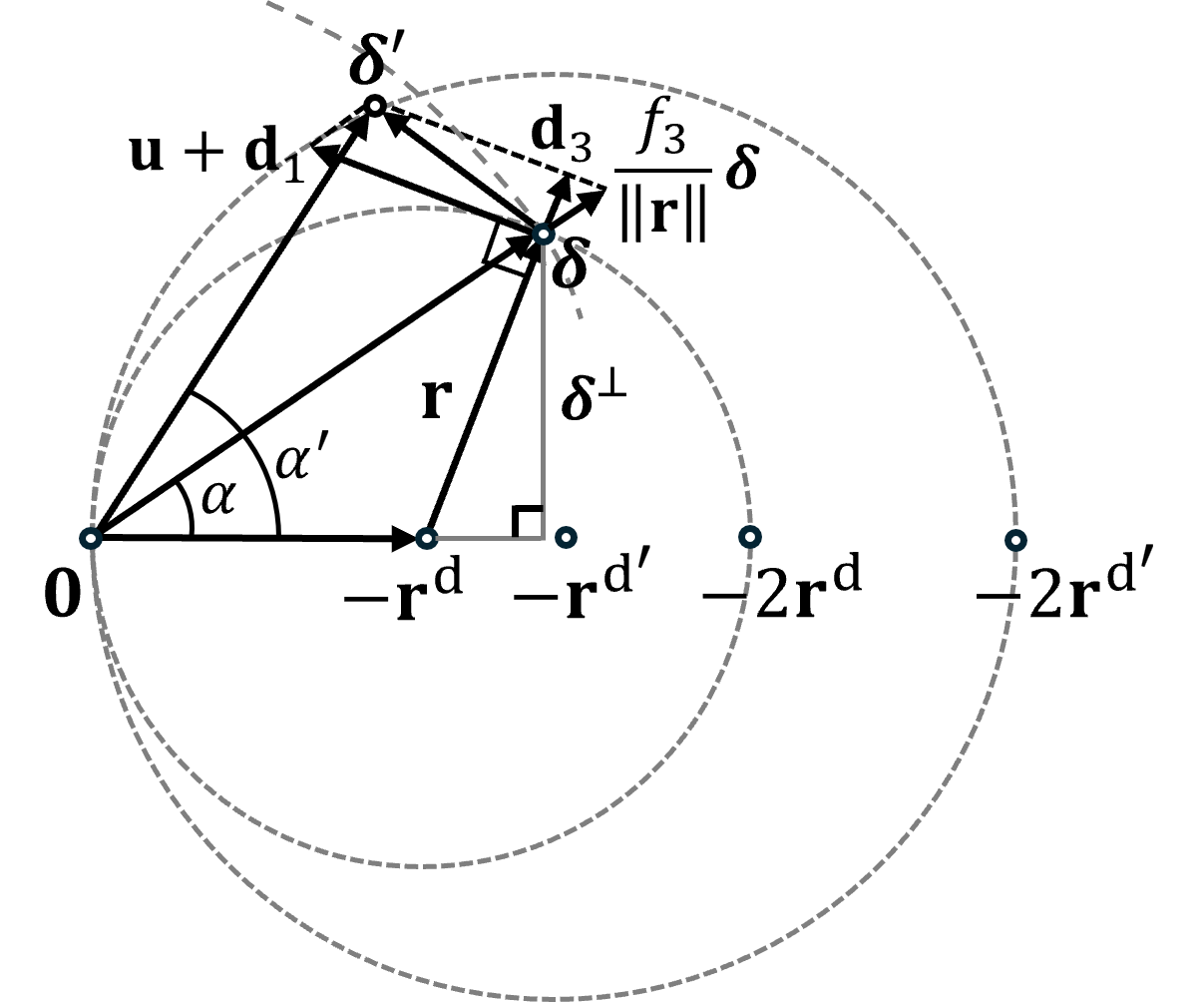}
\caption{The diagram describes the system \eqref{eq:dot_deltaBOM}. If the scaling disturbance component $\m{d}_3$ is zero, then the system is constrained on $\|\bm{\delta}+\m{r}^d\|=\|\m{r}^d\|=\|\m{r}^d(0)\|$. If the scaling disturbance $\m{d}_3$ is nonzero, then $\|\bm{\delta}\|$ may increase or decrease.}
\label{fig:placeholder}
\end{figure}
\begin{remark}[Effects of scaling disturbance]
We consider the disturbance $\m{d}$ to consist of an additional term $\m{d}_3 = f_3(t) \frac{\m{r}}{\|\m{r}\|}$ corresponding to a scaling of the whole formation. In this case, 
\begin{align}
\dot{\m{p}} = -\tilde{\m{R}}_{\rm b}^\top(\bm{\Gamma}{\rm sgn}(\tilde{\m{R}}_{\rm b}\bm{\delta})-\m{f}_1) + \m{1}_n\otimes\m{f}_2 + f_3\frac{\m{r}}{\|\m{r}\|}.
\end{align}
With $s=\frac{1}{\sqrt{n}}\|\m{r}\|$, we have 
\begin{align*}
    \dot{s} &= \frac{\m{r}^\top}{\sqrt{n}\|\m{r}\|}\left(\tilde{\m{R}}_{\rm b}^\top(\bm{\Gamma}{\rm sgn}(\tilde{\m{R}}_{\rm b}\bm{\delta})-\m{f}_1) + \m{1}_n\otimes\m{f}_2 + f_3\frac{\m{r}}{\|\m{r}\|}\right)\\
    &=\frac{f_3\m{r}^\top\m{r}}{\sqrt{n}\|\m{r}\|^2}=\frac{f_3}{\sqrt{n}}.
\end{align*}
The desired formation is defined as
\begin{align}
    \m{p}^d = \m{1}_n \otimes \bar{\m{p}} + \|\m{r}\|\frac{\m{r}^*}{\|\m{r}^*\|}= \m{1}_n \otimes \bar{\m{p}} + \sqrt{n}s\frac{\m{r}^*}{\|\m{r}^*\|}.
\end{align}
Let $\bm{\delta} = \m{p}-\m{p}^d = \sqrt{n}s\left(\frac{\m{r}}{\|\m{r}\|} - \frac{\m{r}^*}{\|\m{r}^*\|}\right)=\m{r}-\|\m{r}\|\frac{\m{r}^*}{\|\m{r}^*\|}=\m{r}-\m{r}^d$, we have
\begin{align}
\dot{\bm{\delta}} &= -\tilde{\m{R}}_{\rm b}^\top\left(\bm{\Gamma}{\rm sgn}(\tilde{\m{R}}_{\rm b}\bm{\delta})-\m{f}_1\right) + f_3 \left(\frac{\m{r}}{\|\m{r}\|} - \frac{\m{r}^*}{\|\m{r}^*\|}\right)\nonumber \\
&= -\tilde{\m{R}}_{\rm b}^\top\left(\bm{\Gamma}{\rm sgn}(\tilde{\m{R}}_{\rm b}\bm{\delta})-\m{f}_1 \right) + \frac{f_3}{\|\m{r}\|}\bm{\delta} \label{eq:dot_deltaBOM}
\end{align}
For simplicity, suppose that $\gamma_{ij}$ are sufficient large to suppress the disturbance term $\m{d}_1$, we consider the Lyapunov function $V=\frac{1}{2}\|\bm{\delta}\|^2$, we have
\begin{align}
\dot{V} &= \bm{\delta}^\top{\rm K}[\dot{\bm{\delta}}]=\bm{\delta}^\top \left(-\tilde{\m{R}}_{\rm b}^\top\left(\bm{\Gamma}{\rm K}[{\rm sgn}](\tilde{\m{R}}_{\rm b}\bm{\delta})-\m{f}_1)\right) + \frac{f_3}{\|\m{r}\|}\bm{\delta} \right) \nonumber\\
    &\leq - \frac{\zeta\min_k\|\m{z}_k^d\|}{2\sqrt{n-1}s}\bm{\delta}^\top\m{L}_{\rm b}^*\bm{\delta} + \frac{1}{\sqrt{n}s}f_3\|\bm{\delta}\|^2, \label{eq:dotV_scaling}
\end{align}
where the derivation of immediate inequalities are similar to those in Lemma \ref{lem:4.2} and Lemma~\ref{lem:4.1} iii.b. As depicted in Fig.~\ref{fig:placeholder}, the $\m{d}_3$ changes the magnitude of $\m{r}^d = s\hat{\m{r}}^*$, which in turn may increase or decrease $\|\bm{\delta}\|$ depending on the sign of $f_3(t)$. As a result, the angle $\alpha(t)$ may not monotonically increasing as in the analysis of Lemma \ref{lem:4.2}.  
\begin{align*}
    \dot{V} \leq - \left(\frac{\zeta\min_k\|\m{z}_k^d\|}{2\sqrt{n-1}s}\lambda_{d+2}(\m{L}^*)\sin^2\alpha- \frac{1}{\sqrt{n}s}\sup_{t\geq 0}|f_3|\right)\|\bm{\delta}\|^2
\end{align*}
However, we can still assume that $f_3$ is sufficiently small, so that the formation scale are bounded $s_M>|s|>s_m>0$, $\forall t\geq 0$. In this case, suppose further that $\alpha(0)$ is sufficiently large, we have $\dot{V}\leq 0$. Fig.~\ref{fig:placeholder} illustrates a scenario where the scaling disturbance perturbs $\bm{\delta}$ to $\tilde{\bm{\delta}}=\bm{\delta}+\m{d}_3$, with $\|\tilde{\bm{\delta}}\|>\|\bm{\delta}\|$. Due to the control input, $\bm{\delta}$ is actually controlled to $\bm{\delta}'$, with $\|\bm{\delta}'\|<\|\bm{\delta}\|$ and $\alpha'>\alpha$. As $\alpha$ increases, the inequality of $\dot{V}<0$ holds with $\|\bm{\delta}\|\ne 0$, and thus, $\bm{\delta} \to \m{0}_{dn}$.
\end{remark}

\subsection{A smooth adaptive bearing-only control law}
A main issue in implementing sliding-mode control laws is the chattering phenomenon, which arises from the discontinuity of the signum function. To mitigate this effect, we modify the bearing-only control law \eqref{eq:Bearing_OnlyC} for each agent as follows
\begin{subequations}
\label{eq:Practical_Bearing_OnlyC}
\begin{align} 
    \m{u}_i &= -k_p \sum_{j\in \mc{N}_i}\m{P}_{\m{g}_{ij}} \m{g}_{ij}^* - \sum_{j\in \mc{N}_i} \gamma_{ij} \frac{\m{P}_{\m{g}_{ij}} \m{g}_{ij}^*}{\|\m{P}_{\m{g}_{ij}} \m{g}_{ij}^*\|+\varepsilon}, \label{eq:PractBearing_OnlyC1}\\
    \dot{\gamma}_{ij} &= k_{\gamma} \left(\frac{\|\m{P}_{\m{g}_{ij}} \m{g}_{ij}^*\|^2}{\|\m{P}_{\m{g}_{ij}} \m{g}_{ij}^*\|+\varepsilon} -\alpha\gamma_{ij}\right), \; \forall j\in\mc{N}_i,\label{eq:PractBearing_OnlyC2}
\end{align}
\end{subequations}
where $\varepsilon,\alpha$ are small positive constants, $k_p$, $k_{\gamma}$ are positive control gain and adaptive rate, respectively. 

We rewrite equation \eqref{eq:Practical_Bearing_OnlyC} governing the motion of $n$-agent system in matrix-form as follows
\begin{align}
    \dot{\m{p}} &= k_p\tilde{\m{R}}_{\rm b}^\top\m{g}^*+\bar{\m{H}}^\top \bar{\bm{\Gamma}} {\m{P}}_{\m{g}}^{\varepsilon}\m{g}^* + \m{d} \nonumber \\
    &= k_p\tilde{\m{R}}_{\rm b}^\top\m{g}^* + \bar{\m{H}}^\top {\m{P}}_{\m{g}}\bar{\bm{\Gamma}} {\m{P}}_{\m{g}}^{\varepsilon}\m{g}^* + \tilde{\m{R}}_{\rm b}^\top\m{f}_1 + \m{1}_n\otimes \m{f}_2\nonumber\\
    &= k_p\tilde{\m{R}}_{\rm b}^\top\m{g}^* + \tilde{\m{R}}_{\rm b}^\top\left(\bar{\bm{\Gamma}}{\m{P}}_{\m{g}}^{\varepsilon}\m{g}^*+\m{f}_1 \right) + \m{1}_n\otimes \m{f}_2, \label{eq:BOM_Pract_Syst}
\end{align}
where ${\m{P}}_{\m{g}}^{\varepsilon} \triangleq {\rm blkdiag}\left( \frac{\m{P}_{\m{g}_k}}{\|\m{P}_{\m{g}_k}\m{g}_k^*\|+\varepsilon}\right)$, and we have used the fact that $\m{P}_{\m{g}}\m{P}_{\m{g}}^{\varepsilon}=\m{P}_{\m{g}}^{\varepsilon}\m{P}_{\m{g}}=\m{P}_{\m{g}}^{\varepsilon}$. 

It is not hard to verify that $\dot{\m{p}}(t) \perp {\rm ker}(\tilde{\m{R}}_{\rm b})$, and thus all claims of Lemma~\ref{lem:4.1} hold with the system \eqref{eq:BOM_Pract_Syst}, \eqref{eq:PractBearing_OnlyC2}. Let $\bm{\delta}=\m{p}-\m{p}^d$, the main result of this subsection is provided in the following theorem.

\begin{theorem}[Uniform ultimate boundedness of the desired formation] \label{thm:4.2}
Suppose that the assumptions of Problem~\ref{problem:2} hold. Under the control law \eqref{eq:Practical_Bearing_OnlyC}, $\bm{\delta}$ and $\bm{\gamma}$ are globally uniformly ultimately bounded, and the ultimate bound is adjustable by the control law's parameters $k_p,k_{\gamma},\varepsilon,$ and $\alpha$.
\end{theorem}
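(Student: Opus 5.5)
We will mimic the structure of the proof of Theorem~\ref{thm:2}, now with the smooth law \eqref{eq:Practical_Bearing_OnlyC}. Since $\dot{\m{p}}\perp{\rm ker}(\tilde{\m{R}}_{\rm b})$ for \eqref{eq:BOM_Pract_Syst}, Lemma~\ref{lem:4.1} still applies. Hence the centroid moves with $\m{f}_2$, the scale $s$ is constant, all inter-agent distances are bounded by $2s\sqrt{n-1}$, and $\bm{\delta}=\m{p}-\m{p}^d=\m{r}-\m{r}^d$ lies on the sphere $\|\bm{\delta}+\m{r}^d\|=\sqrt{n}s$. As in Lemma~\ref{lem:4.2}, we rewrite the relevant quantities using $\m{z}^d_k=\|\m{z}^d_k\|\m{g}_k^*$, so that $\m{P}_{\m{g}_k}\m{g}_k^*=\m{P}_{\m{g}_k}\bar{\m{H}}_k\bm{\delta}/\|\m{z}_k^d\|$ with $\bar{\m{H}}_k\bm{\delta}=\m{z}_k-\m{z}_k^d$. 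For brevity write $\m{w}_k\triangleq\m{P}_{\m{g}_k}\m{g}_k^*$, so that $\|\m{w}_k\|^2=\m{g}_k^\top\m{P}_{\m{g}_k^*}\m{g}_k$ by the symmetry identity used in Lemma~\ref{lem:4.2}.

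Consider the candidate
\[
V=\tfrac12\|\bm{\delta}\|^2+\tfrac{1}{2k_\gamma}\sum_{k=1}^m\|\m{z}^d_k\|(\gamma_k-\gamma^\circ)^2,
\]
where $\gamma^\circ\ge0$ is a constant to be chosen; the weights $\|\m{z}_k^d\|$ are constant because the scale is constant. Along solutions, the $k$-th edge contributes the following terms to $\dot V$:
\[
-k_p\|\m{z}_k^d\|\|\m{w}_k\|^2,\qquad -\gamma_k\|\m{z}_k^d\|\frac{\|\m{w}_k\|^2}{\|\m{w}_k\|+\varepsilon},\qquad \|\m{z}_k^d\|\,\m{f}_{1k}^\top\m{w}_k,
\]
while the adaptation contributes
\[
\|\m{z}^d_k\|(\gamma_k-\gamma^\circ)\Big(\frac{\|\m{w}_k\|^2}{\|\m{w}_k\|+\varepsilon}-\alpha\gamma_k\Big).
\]
The $\gamma_k$ terms cancel, leaving $-\gamma^\circ\|\m{z}^d_k\|\|\m{w}_k\|^2/(\|\m{w}_k\|+\varepsilon)$. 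We choose $\gamma^\circ=0$, or any value, and bound the disturbance cross term by Young's inequality, $\|\m{f}_{1k}\|\|\m{w}_k\|\le \frac{k_p}{2}\|\m{w}_k\|^2+\frac{\beta_5^2 d}{2k_p}$. For the leakage term we use $-\alpha\gamma_k(\gamma_k-\gamma^\circ)\le-\frac{\alpha}{2}(\gamma_k-\gamma^\circ)^2+\frac{\alpha}{2}(\gamma^\circ)^2$. Since $\|\m{w}_k\|\le1$, the remaining terms are uniformly bounded. This yields
\[
\dot V\le-\frac{k_p}{2}\sum_k\|\m{z}^d_k\|\|\m{w}_k\|^2-\frac{\alpha}{2}\sum_k\|\m{z}^d_k\|(\gamma_k-\gamma^\circ)^2+c,
\]
where $c=\sum_k\|\m{z}^d_k\|\big(\frac{d\beta_5^2}{2k_p}+\frac{\alpha}{2}(\gamma^\circ)^2\big)$ is a constant.

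The next step is to lower bound $\sum_k\|\m{z}^d_k\|\|\m{w}_k\|^2$ by a multiple of $\|\bm{\delta}\|^2$. Following Lemma~\ref{lem:4.2}, $\|\m{z}_k\|^2\|\m{w}_k\|^2=\m{z}_k^\top\m{P}_{\m{g}_k^*}\m{z}_k$, and $\|\m{z}_k\|\le 2s\sqrt{n-1}$. Hence
\[
\sum_k\|\m{z}^d_k\|\|\m{w}_k\|^2\ge\frac{\min_k\|\m{z}^d_k\|}{4(n-1)s^2}\,\bm{\delta}^\top\m{L}_{\rm b}^*\bm{\delta}.
\]
We then need $\bm{\delta}^\top\m{L}_{\rm b}^*\bm{\delta}\ge\lambda_{d+2}(\m{L}_{\rm b}^*)\sin^2\alpha\,\|\bm{\delta}\|^2$. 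This is the main obstacle, because $\bm{\delta}$ is not orthogonal to ${\rm ker}(\m{L}_{\rm b}^*)$: it has a component along $\hat{\m{r}}^*$, and $\sin\alpha$ degenerates near the antipodal point $\m{p}^u$.

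Our plan for this step has three parts. First, note that on the sphere the component of $\bm{\delta}$ along $\hat{\m{r}}^*$ equals $-\|\bm{\delta}\|^2/(2\sqrt{n}s)$. Consequently $\|\bm{\delta}^\perp\|^2\ge\|\bm{\delta}\|^2(1-\|\bm{\delta}\|^2/(4ns^2))$, which gives a quadratic lower bound away from $\m{p}^u$. Second, since $\|\bm{\delta}\|\le2\sqrt{n}s$ always, $\bm{\delta}$ is trivially globally bounded. Uniform ultimate boundedness of $\bm{\delta}$ is therefore automatic in the weak sense, and the content of the theorem is the smallness of the ultimate bound. Third, in the region $\|\bm{\delta}\|\le\sqrt{2n}s$ the bound reads $\dot V\le-\kappa V+c$, with $\kappa=\min\{\frac{k_p}{2}\frac{\min_k\|\m{z}^d_k\|\lambda_{d+2}}{8(n-1)s^2},\,\alpha k_\gamma\}$. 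Standard comparison arguments then give ultimate boundedness of $V$ by $c/\kappa$, which is small when $k_p$ is large, and which for $\gamma$ is governed by $\alpha$, $k_\gamma$ and $\varepsilon$.

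The delicate issue is excluding that trajectories linger near $\m{p}^u$. We would handle it as in Theorem~\ref{thm:2}, with a Chetaev-type argument on $W=\frac12\|\m{p}-\m{p}^u\|^2$ showing repulsion from $\m{p}^u$ whenever $k_p$ dominates $\beta_5$ there, and otherwise state the result for initial conditions outside a neighborhood of $\m{p}^u$. Boundedness of $\bm{\gamma}$ follows independently and directly from \eqref{eq:PractBearing_OnlyC2}: since $\dot\gamma_k\le k_\gamma(1-\alpha\gamma_k)$, we get $\limsup\gamma_k\le1/\alpha$ uniformly, with exponential rate $k_\gamma\alpha$.
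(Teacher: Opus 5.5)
The gap is global: your coercivity estimate holds only on a hemisphere of the shape sphere, and the Chetaev plan for the rest cannot work for the smooth law.

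\textbf{Comparison with the paper.} Your computation of $\dot V$ is sound and follows the paper's skeleton, which uses the same $\|\m{z}_k^d\|$-weighted Lyapunov function. The real difference is how the disturbance is absorbed.
\begin{itemize}
\item The paper centers the gains at $\beta_7>\max_k\sup_t\|\m{f}_k\|$, so the adaptive mechanism dominates $\m{f}_1$. Only the residual $\Delta=(\frac{\alpha}{2}\beta_7^2+\varepsilon\beta_7)\sum_k\|\m{z}_k^d\|$ survives, which is where ``adjustable by $\alpha,\varepsilon$'' comes from.
\item Your choice $\gamma^\circ=0$ with Young's inequality lets $k_p$ do the work. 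The residual is of order $\beta_5^2/k_p$, and the adaptation plays no role.
\end{itemize}
Either is fine for that step. The paper then writes $\dot V\le-\zeta V+\Delta$ on the whole state space, with $\zeta$ built from $\lambda_{d+2}(\m{L}_{\rm b}^*)$. That is, it uses the coercivity bound as if $\bm{\delta}\perp\ker(\m{L}_{\rm b}^*)$. Your identity $\bm{\delta}^\top\hat{\m{r}}^*=-\|\bm{\delta}\|^2/(2\sqrt{n}s)$ correctly shows the bound is available only on the hemisphere $\|\bm{\delta}\|^2\le 2ns^2$, equivalently $\m{r}^\top\m{r}^*\ge 0$.

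\textbf{The gap.} The problem is the complement of that hemisphere.
\begin{itemize}
\item At $\m{p}^u$ one has $\m{g}_k=-\m{g}_k^*$, hence $\m{w}_k=\m{0}$. The smooth input \eqref{eq:PractBearing_OnlyC1} therefore vanishes there and is only $O(\|\m{p}-\m{p}^u\|)$ nearby.
\item The matched disturbance $\tilde{\m{R}}_{\rm b}^\top\m{f}_1$ can have size of order $\beta_5$ along every tangent direction of the sphere. So no choice of $k_p$ makes the repulsion dominate $\beta_5$ on a neighborhood of $\m{p}^u$.
\item The Chetaev step in Lemma~\ref{lem:4.2} worked only because the sign function gives a repulsion of magnitude $\gamma_k>\beta_5$ arbitrarily close to $\m{p}^u$. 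The smooth law has no such term.
\end{itemize}
In fact the statement with a global, parameter-adjustable bound does not hold. With $\m{f}_1=\m{0}$, a trajectory started at $\m{p}^u$ stays there, so $\|\bm{\delta}(t)\|\equiv 2\sqrt{n}s$. More generally, any shape near $\m{p}^u$ can be frozen by a matched disturbance that cancels the input. Its $k$-th block has norm at most $(k_p+\gamma_k/\varepsilon)\|\m{w}_k\|$, so it can be made arbitrarily small.

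\textbf{What you can prove.} Drop the Chetaev plan and prove the regional version.
\begin{itemize}
\item Add the invariance step you skipped. Your inequality $\dot V\le-\kappa V+c$ holds only inside the hemisphere, so the comparison lemma needs the trajectory to stay there.
\item With $\gamma^\circ=0$ you can use $\frac12\|\bm{\delta}\|^2$ alone, since the $\gamma_k$-term is nonpositive for $\gamma_k\ge 0$. On the hemisphere its derivative is at most $-\kappa_\delta\frac12\|\bm{\delta}\|^2+c$, with $\kappa_\delta$ proportional to $k_p$.
\item It is then negative on the equator $\|\bm{\delta}\|^2=2ns^2$ once $k_p$ is large relative to $\beta_5$. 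Hence the hemisphere is positively invariant, and $\|\bm{\delta}\|$ is ultimately of order $s\beta_5/k_p$.
\item Outside the hemisphere, $\sum_k\|\m{z}_k^d\|\|\m{w}_k\|^2$ is at least a positive multiple of $\|\m{p}-\m{p}^u\|^2$. This gives entry into the hemisphere from every initial condition outside a neighborhood of $\m{p}^u$ of radius of order $s\beta_5/k_p$.
\end{itemize}
Your separate bound $\limsup_t\gamma_k\le 1/\alpha$, from $\dot\gamma_k\le k_\gamma(1-\alpha\gamma_k)$, is correct and global.
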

\begin{IEEEproof}
The proof is provided in Appendix~\ref{app:thm4_2}.
\end{IEEEproof}

Finally, we remark that a similar collision avoidance condition in Corrolary~\ref{cor:CA2} holds for the system \eqref{eq:BOM_Pract_Syst}, \eqref{eq:PractBearing_OnlyC2}.

\section{Simulation results}
\label{sec:5}
In this section, we provide simulation examples to demonstrate the effectiveness of the proposed formation control laws proposed in Sections \ref{sec:3} and \ref{sec:4}. In each simulation, we consider a system of 6 agents in the two-dimensional space with the interaction graph $\mc{G}$ as depicted in Fig.~\ref{fig:graph}. The desired formation is specified by a set of 12 desired bearing vectors, corresponding to a six-regular polygon configuration $\m{p}^*$ with $\m{p}_1^* = [\frac{\sqrt{3}}{2},\frac{1}{2}]^\top$ and $\m{p}_{i+1}^*=\m{R}\left(\frac{\pi}{3}\right)$, for $i=2,\ldots,6$ and $\m{R}(\alpha)=\begin{bmatrix}
    \cos\alpha & -\sin\alpha \\ \sin\alpha  & \cos\alpha
\end{bmatrix}$. The initial values of agents' positions and adaptive gains in all simulations are chosen as in Table \ref{tab:my-table}.

\begin{table}[]
\centering
\caption{The initial positions $\m{p}_k(0)$, $k=1,\ldots,6,$ and adaptive gains $\gamma_k(0),k=1,\ldots,12$.}
\label{tab:my-table}
\begin{tabular}{@{}lcccccc@{}}
\toprule
$k$               & 1 & 2 & 3 & 4 & 5 & 6 \\ \midrule
$\m{p}_k$      & $\begin{bmatrix}{1.96}\\{-3.24}\end{bmatrix}$  & $\begin{bmatrix}{0.5}\\{2.35}\end{bmatrix}$  & $\begin{bmatrix}{-3.39}\\{-0.2}\end{bmatrix}$  & $\begin{bmatrix}{-5.6} \\ {-0.3}\end{bmatrix}$  & $\begin{bmatrix}{-2.0}\\{-5.3}\end{bmatrix}$  & $\begin{bmatrix}{3.8}\\ {-1.2}\end{bmatrix}$  \\
$\gamma_k$     &  0.13 &  0.51 & 0.18  & 0.79  &  0.85 &   0.5 \\
$\gamma_{6+k}$ & 0.85  & 0.08  & 0.51  & 0.07  &  0.43 &   0.1 \\ \bottomrule
\end{tabular}
\end{table}

\begin{figure}[t!]
\centering
\begin{tikzpicture}[scale=1,
    every node/.style={circle, draw, , inner sep=1pt}
]
\foreach \i in {1,...,6} {
    \node (v\i) at ({30+60*(\i-1)}:1) {\i};
}
\draw (v1) -- (v2) -- (v3) -- (v4) -- (v5) -- (v6) -- (v1);
\draw (v1) -- (v3) -- (v5) -- (v1);
\draw (v2) -- (v4) -- (v6) -- (v2);
\end{tikzpicture}
\caption{A regular graph $\mc{G}$ of six-vertices in simulation.}
\label{fig:graph}
\end{figure}

\subsection{Simulation 1: Bearing-based control}
First, we simulate the six-agent system under the control law~\eqref{eq:bearing_based_control_law}, with $k_{\gamma}=0.1$. The disturbance vector $\m{d}$ consists of three terms as described in \eqref{eq:disturbance}, with parameters
\begin{align*}
\m{d}_1 &= \frac{1}{4}\m{L}_{\rm b}^*(\m{1}_2\otimes [1,2,3,4,5,6,0,0.5,1,1.5,2,2.5]^\top), \\
\m{f}_2 &= \left[3+\cos(0.125\pi t),1.5\sin\left(\pi t/{6} \right)\right]^\top, \\
f_3 &= 0.125\left(8e^{-0.1t}+\sin\left({\pi t}/{10}\right) \right).
\end{align*}
\begin{figure}[t]
\centering
\subfloat[]{\includegraphics[width=.42\textwidth]{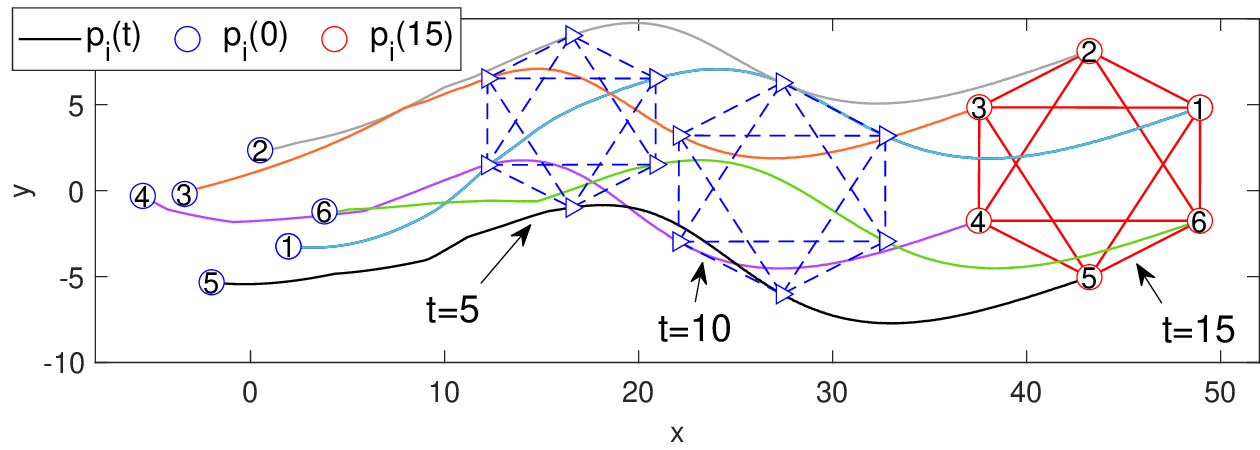}} \\
\subfloat[]{\includegraphics[width=.42\textwidth]{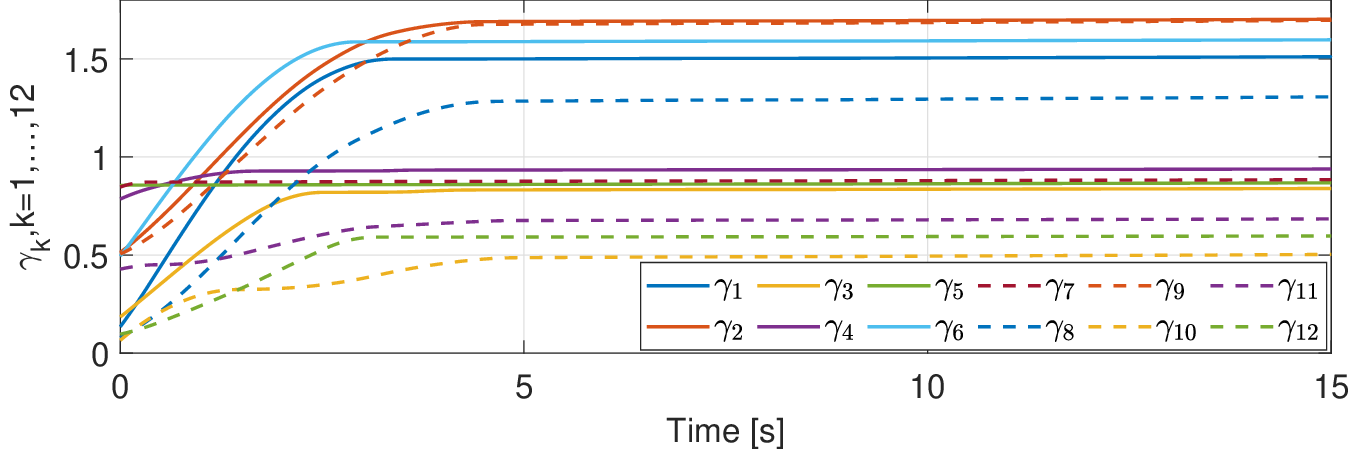}} \\
\subfloat[]{\includegraphics[width=.41\textwidth]{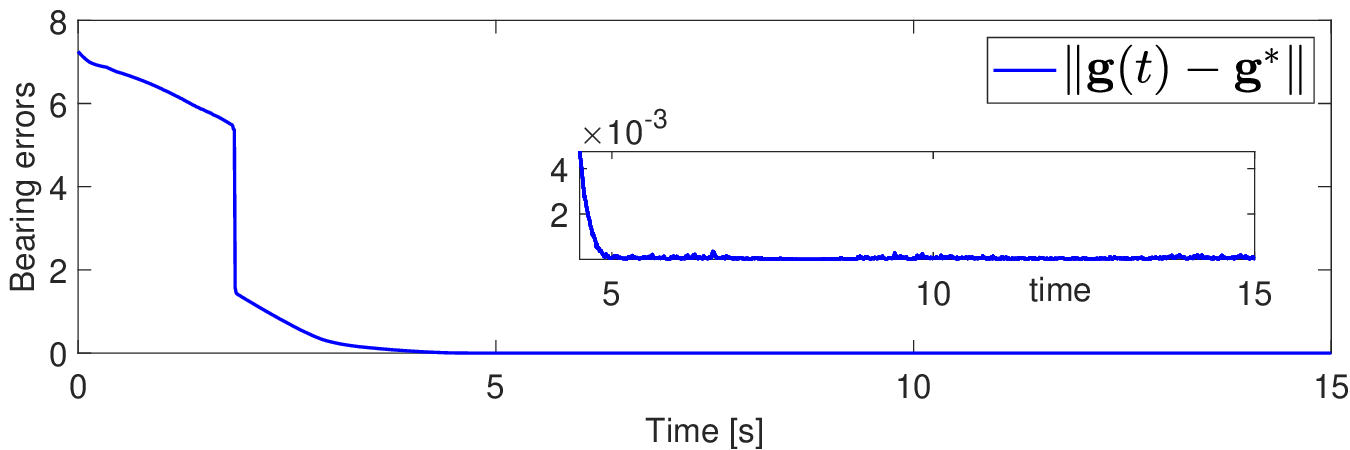}}
\caption{Simulation of six-agent system under the adaptive sliding-mode bearing-based control law \eqref{eq:bearing_based_control_law}.}
\label{fig:sim1}
\end{figure}
\begin{figure}[th]
\centering
\subfloat[]{\includegraphics[width=.4\textwidth]{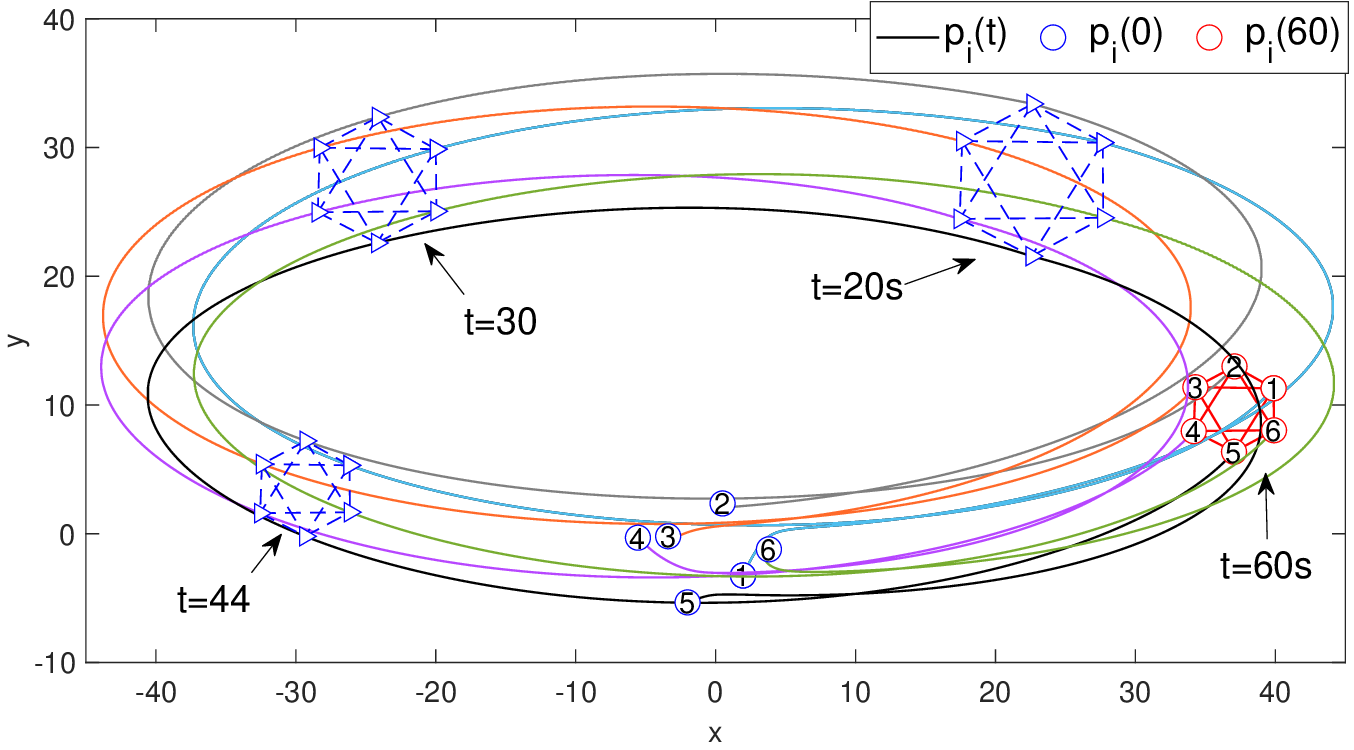}} \\
\subfloat[]{\includegraphics[width=.41\textwidth]{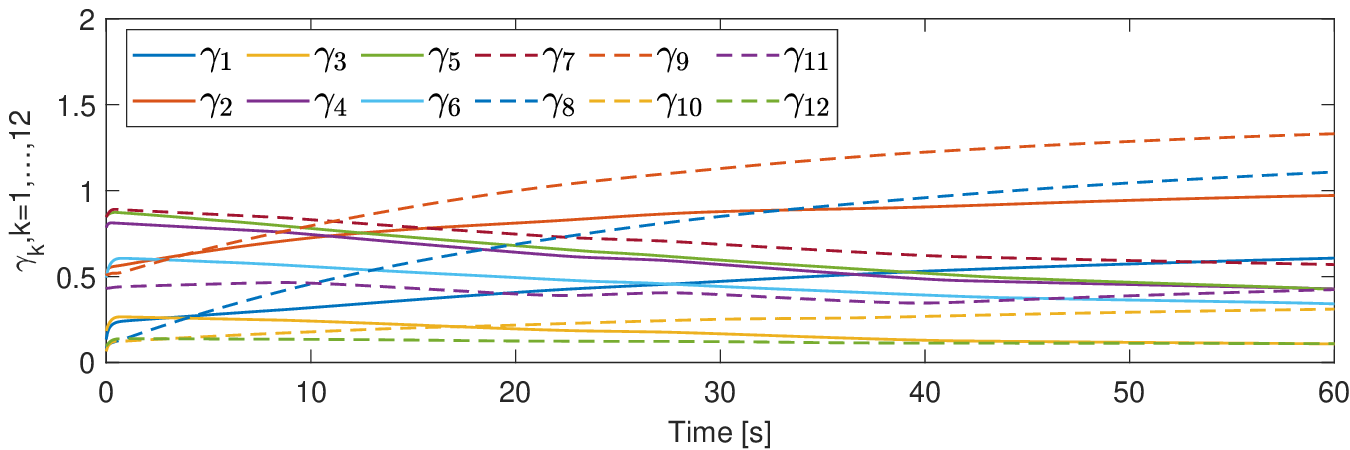}} \\
\subfloat[]{\includegraphics[width=.4\textwidth]{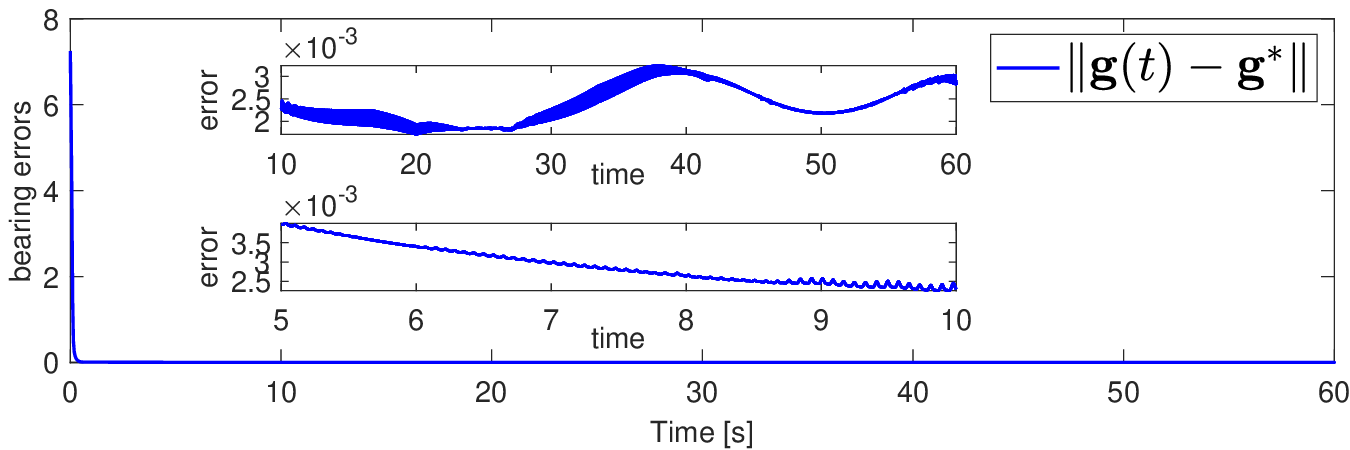}}
\caption{Simulation of 6-agent system under the adaptive bearing-based control law \eqref{eq:Practical_Bearing_B}.}
\label{fig:sim1a}
\end{figure}
Simulation results for $t\in [0,50]$ (second) are given in Fig.~\ref{fig:sim1}. The matched disturbances are compensated along with the attainment of the set of desired bearing vectors. Bearing errors $\|\m{g}-\m{g}^*\|$ converges to 0 and $\gamma_k$ converge to some constant values for all $k=1,\ldots,12$ after about 5 seconds. The formation is driven by two disturbance terms $\m{d}_2$ (translation velocity $\m{f}_2$) and $\m{d}_3$ (formation rescales at rate $f_3(t)$) in the set of desired formations.

Second, we consider the system under the smooth adaptive bearing-based control law \eqref{eq:Practical_Bearing_B}, with $k_{\gamma}=0.2$, $\varepsilon=0.05$ and $\alpha=0.05$. The mismatched disturbance terms are changed correspondingly to $\m{f}_2=[5\cos(0.04\pi t),2\sin(0.04\pi t)]^\top$ and $f_3 = 0.25\left(3e^{-0.1t} +\sin\left(\frac{\pi t}{10}\right) \right)$.

Simulation results for $t\in [0,60]$ (seconds) are depicted in Fig.~\ref{fig:sim1a}. After 5 seconds, the bearing errors $\|\m{g}(t)-\m{g}^*\|$ are kept within the interval $[0,~5\cdot 10^{-3}]$. The adaptive gains do not converge to fixed values, and also vary within a positive, and bounded interval. The formation moves with velocity $\m{f}_2$ and changes the scale due to the mismatched disturbances $\m{d}_2$ and $\m{d}_3$, respectively.

\subsection{Simulation 2: Bearing-only control}
In this subsection, we first consider the six-agent system under the adaptive sliding-mode bearing-only control law \eqref{eq:Bearing_OnlyC}, with $k_{\gamma}=0.25$. The functions $\m{f}_1,\m{f}_2,f_3$ are selected the same as in Simulation 1, but the disturbances' components are $\m{d}_1 = \m{R}_{\rm b}^\top\m{f}_1$ and $\m{d}_3 = f_3 \hat{\m{r}}$. From the simulation results displayed in Fig.~\ref{fig:sim2}, the adaptive gains converge to constant values, the desired bearing constraints are almost achieved after 10 seconds, and the bearing error is kept smaller than $5\cdot 10^{-4}$ for $t\geq 10$. The motions of the formation consists of a translation with velocity $\m{f}_2$ and a scaling motion $f_3\hat{\m{r}}$. .

\begin{figure}[t]
\centering
\subfloat[]{\includegraphics[width=.45\textwidth]{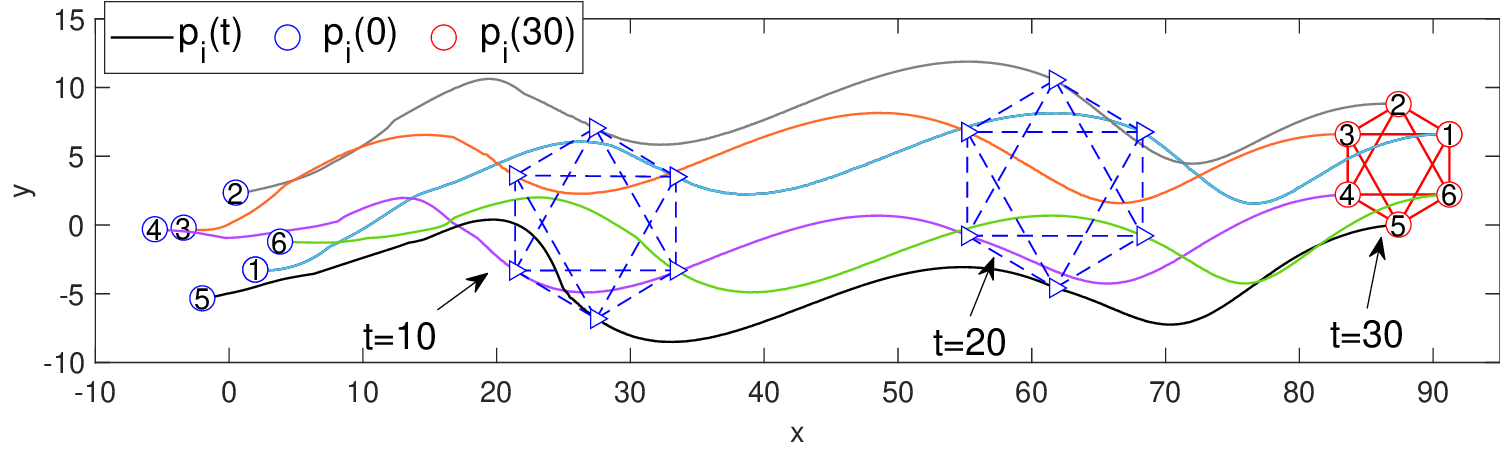}} \\
\subfloat[]{\includegraphics[width=.465\textwidth]{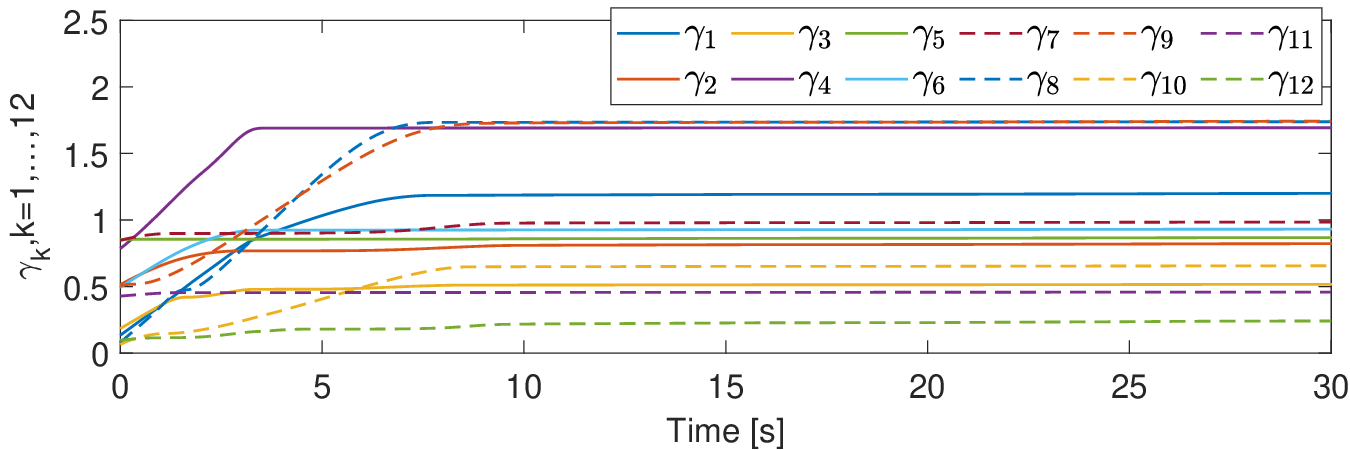}} \\
\subfloat[]{\includegraphics[width=.45\textwidth]{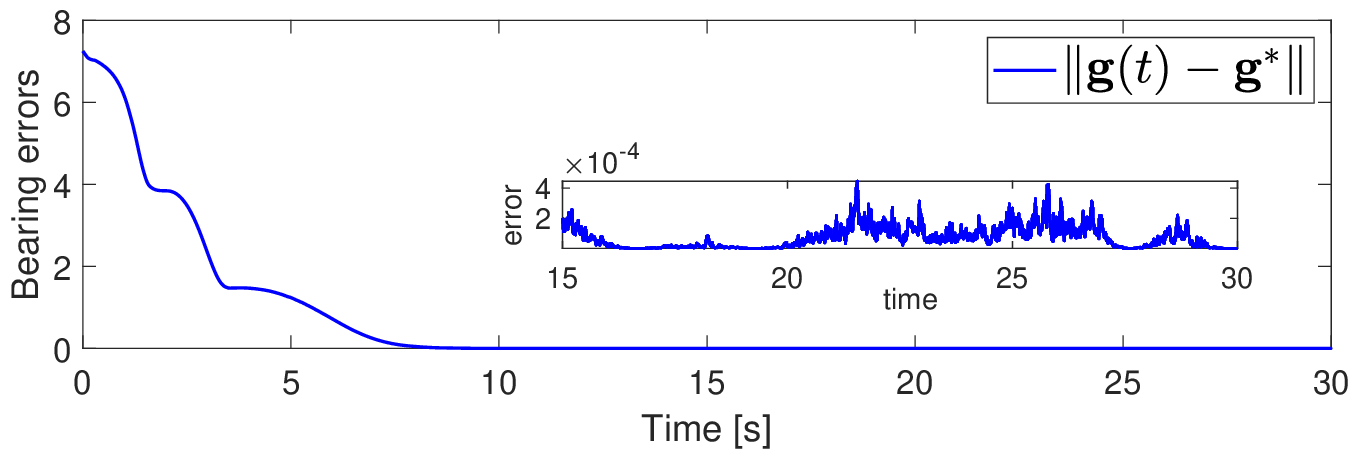}}
\caption{Simulation of six-agent system under the adaptive sliding-mode bearing-only control law \eqref{eq:Bearing_OnlyC}.}
    \label{fig:sim2}
\end{figure}

Second, we consider the six-agent system under the smooth adaptive bearing-only control law \eqref{eq:Practical_Bearing_OnlyC}. The disturbance components are changed to $\m{f}_2 = [2.5\cos(0.015\pi t),\sin(0.015\pi t)]^\top$ and $f_3=0.75e^{-0.1t} + \sin(0.015\pi t)$. 
\begin{figure}[th!]
    \centering
    \subfloat[]{\includegraphics[width=.44\textwidth]{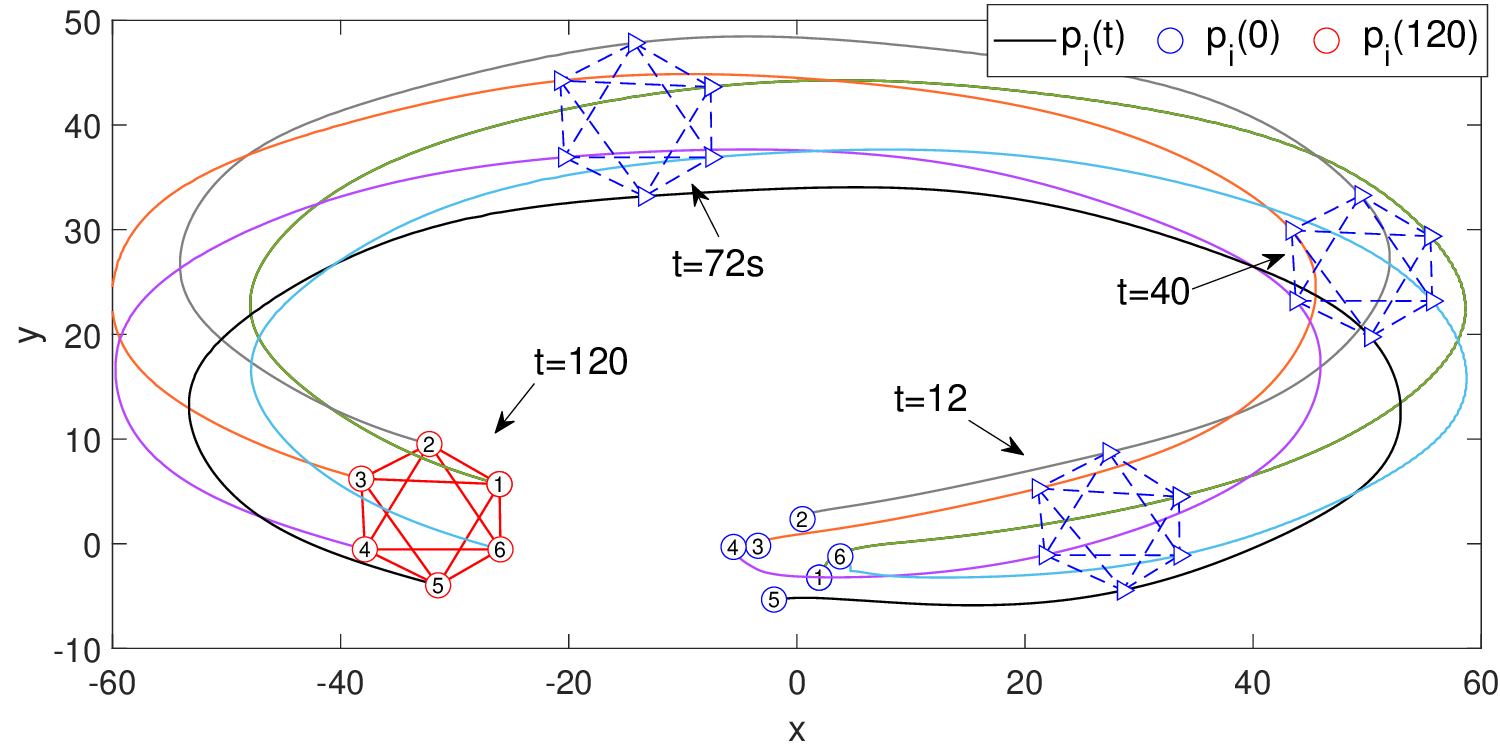}} \\
    \subfloat[]{\includegraphics[width=.45\textwidth]{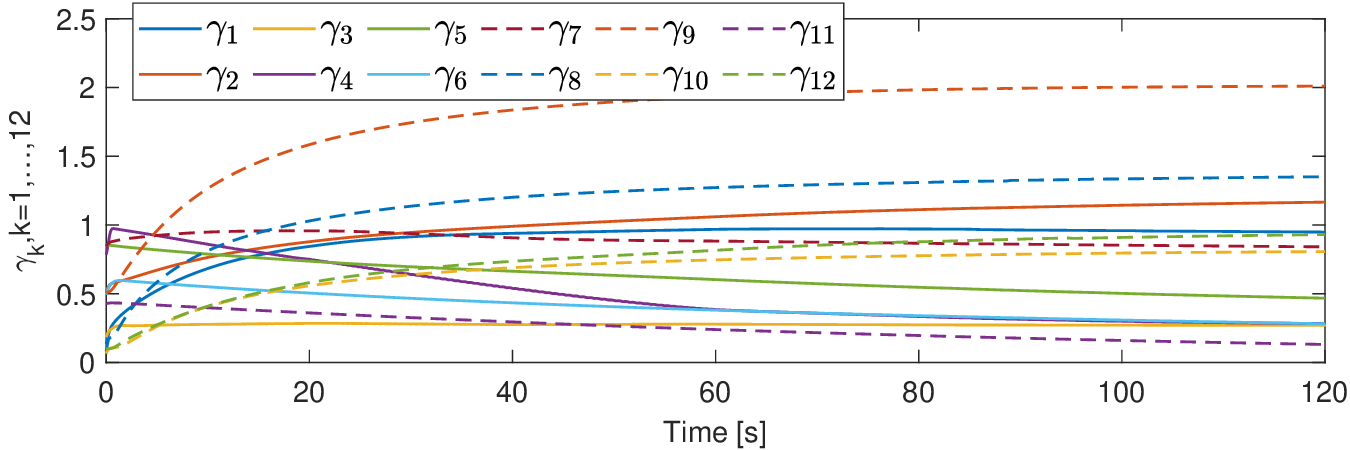}} \\
    \subfloat[]{\includegraphics[width=.44\textwidth]{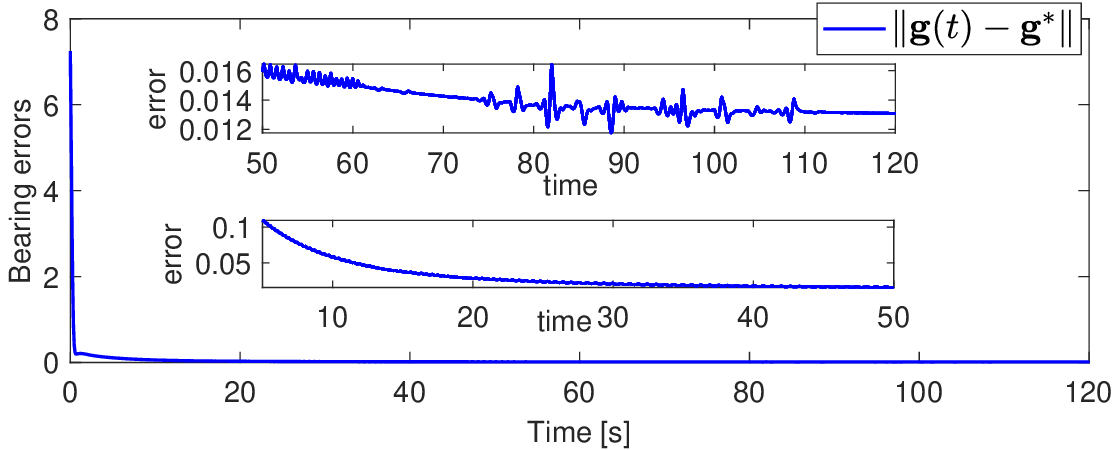}}
    \caption{Simulation of six-agent system under the smooth bearing-only control law \eqref{eq:Practical_Bearing_OnlyC}.}
    \label{fig:sim2a}
\end{figure}
With the control parameters selected as $k_p=4$, $k_{\gamma}=0.5$, $\varepsilon = 0.03$ and $\alpha = 0.02$, the simulation results of the system for $0\leq t\leq 120$ are given in Fig.~\ref{fig:sim2a}. Observe that the bearing error is maintained smaller than 0.05 for $t\geq 20$ (second). The adaptive gains do not converge, and vary within some positive, small intervals. Thus, we conclude that simulation results are consistent with the analysis.

\section{Conclusions}
\label{sec:6}
This paper investigates bearing-constrained formation control under unknown bounded disturbances using both displacement-based and bearing-only measurements. The disturbances were decomposed into matched and mismatched components, which introduce deterministic time-varying perturbations to the target formation. Adaptive sliding-mode controllers were proposed to uniformly stabilize the set of desired formations and completely reject the matched disturbance. Furthermore, simple smooth adaptive control laws were proposed to ensure uniform ultimate boundedness of the target formation and eliminate chattering phenomenon. While this paper confines the focus on bearing-constrained formation, the adaptive control design strategies in this paper are applicable to other categories of leaderless formation control schemes.


\bibliographystyle{IEEEtran}
\bibliography{IEEEabrv,Ref}

\appendix

\section{Proofs}
\subsection{Proof of Theorem \ref{thm:3.2}}
\label{app:thm3_2}
Using the Lyapunov function $V = \frac{1}{2} \|\bm{\delta}\|^2 + \frac{1}{2k_{\gamma}} \|\bm{\gamma} - \beta_4 \m{1}_n\|^2$, for some $\beta_4 > \max_{k=1,\ldots,m} \sup_{t\geq 0}\|\m{f}_k\|$, which is continuously differentiable, positive definite and radially unbounded with regard to $[\bm{\delta}^\top,(\bm{\gamma} - \gamma^* \m{1}_n)^\top]^\top$, we have
\begin{align}
    \dot{V} &= -k_p\bm{\delta}^\top\m{L}_{\rm b}^*\bm{\delta}  + \frac{1}{k_{\gamma}}\sum_{k=1}^n(\gamma_k-\beta_4)\dot{\gamma}_k \nonumber\\
    &\qquad\qquad- \bm{\delta}^\top\bar{\m{H}}^\top\bar{\m{P}}_{\m{g}^*}\left( \bar{\m{P}}_{\m{g}^*}^\epsilon \bar{\m{H}}\bm{\delta} +\m{f}_1\right) \nonumber\\
    &=-k_p \bm{\delta}^\top \m{L}_{\rm b}^* \bm{\delta} - \sum_{k=1}^m \left(\gamma_k \frac{\|\m{q}_{k}\|^2}{\|\m{q}_{k}\|+\varepsilon} - \m{q}_k^\top \m{f}_k \right)\nonumber\\
    &\qquad\qquad   + \sum_{k=1}^m (\gamma_k - \beta_4) \left(\frac{\|\m{q}_{k}\|^2}{\|\m{q}_{k}\|+\varepsilon} -\alpha\gamma_k \right)  \nonumber\\
    &\leq -k_p \lambda_{d+2}(\m{L}_{\rm b}^*) \|\bm{\delta}\|^2 -\sum_{k=1}^m \left(\beta_4\frac{\|\m{q}_{k}\|^2}{\|\m{q}_{k}\|+\varepsilon} - \|\m{q}_k\|\|\m{f}_k\| \right)  \nonumber \\
    &\qquad\qquad 
    -\sum_{k=1}^m\left(\frac{\alpha}{2}(2\gamma_k^2-2\beta_4\gamma_k+\beta_4^2)-\frac{\alpha}{2}\beta_4^2\right)
    \nonumber\\
    &\leq -k_p \lambda_{d+2}(\m{L}_{\rm b}^*) \|\bm{\delta}\|^2-\sum_{k=1}^m \left(\frac{\alpha}{2}(\gamma_k-\beta_4)^2-\frac{\alpha}{2}\beta_4^2 \right)   \nonumber\\
    &\qquad\qquad -\sum_{k=1}^m \beta_4\|\m{q}_{k}\|\left( \frac{\|\m{q}_{k}\|}{\|\m{q}_{k}\|+\varepsilon} - 1 \right)  \nonumber\\
    &\leq -k_p \lambda_{d+2}(\m{L}_{\rm b}^*) \|\bm{\delta}\|^2 - \frac{\alpha}{2} \sum_{k=1}^m (\gamma_k-\beta_4)^2 + \frac{1}{2}m\alpha\beta_4^2 \label{eq:dotV_BB_pract}
\end{align}
Let $\zeta\triangleq \min\{2k_p\lambda_{d+1}(\m{L}_{\rm b}^*),k_{\gamma}\alpha\}$ and $\Delta\triangleq \frac{1}{2}m\alpha\beta_4^2$, we rewrite \eqref{eq:dotV_BB_pract} as $\dot{V}\leq -\zeta V + \Delta$. By comparison lemma \cite{Khalil2002nonlinear}, $V(t) \leq \frac{\Delta}{\zeta} + {\rm exp}(-\zeta t)\left(V(0)-\frac{\Delta}{\zeta}\right)$. This implies global uniform ultimate boundedness of the equilibrium $[\bm{\delta}^\top,(\bm{\gamma} - \gamma^* \m{1}_n)^\top]^\top=\m{0}_{dn+m}$, with the ultimate bound $\sqrt{\zeta^{-1}\Delta+\epsilon_1}$, for some small positive $\epsilon_1$ \cite{Khalil2002nonlinear}.

\subsection{Proof of Theorem \ref{thm:4.2}}
\label{app:thm4_2}
Consider the Lyapunov function $V=\frac{1}{2}\|\bm{\delta}\|^2 + \frac{1}{2k_{\gamma}}\sum_{k=1}^m\|\m{z}^d_k\|(\gamma_{k}-\beta_7)^2$, where $\beta_7> \max_{k}\sup_{t\geq 0}\|\m{f}_k\|$. By denoting $\bm{\eta}_k \triangleq \m{P}_{\m{g}_k}\m{g}_k^*$, with $\|\bm{\eta}_k\|\leq 1$, we have,
\begin{align}
\dot{V} &=\bm{\delta}^\top\tilde{\m{R}}_{\rm b}^\top\left(k_p\m{g}^*+\bar{\bm{\Gamma}}{\m{P}}_{\m{g}}^{\varepsilon}\m{g}^* + \m{f} \right) + \sum_{k=1}^m \frac{\|\m{z}_k^d\|}{k_{\gamma}}(\gamma_k-\beta_7)\dot{\gamma}_k \nonumber\\
&=-k_p\sum_{k=1}^m\|\m{z}_k^d\|\bm{\eta}_k^\top\bm{\eta}_k - \sum_{k=1}^m\|\m{z}_k^d\| \left(\gamma_k\frac{\bm{\eta}_k^\top\bm{\eta}_k}{\|\bm{\eta}_k\|+\varepsilon} + \bm{\eta}_k^\top\m{f}_k\right) \nonumber\\
&\qquad\qquad + \sum_{k=1}^m \|\m{z}_k^d\|(\gamma_k - \beta_7) \left(\frac{\|\bm{\eta}_k\|^2}{\|\bm{\eta}_k\|+\varepsilon} -\alpha\gamma_k\right) \nonumber\\
&\leq -\sum_{k=1}^m\|\m{z}_k^d\|\left(k_p\|\bm{\eta}_k\|^2+\frac{\alpha}{2}(2\gamma_k^2-2\beta_7\gamma_k+\beta_7^2)-\frac{\alpha}{2}\beta_7^2\right) \nonumber\\
&\qquad\qquad -\sum_{k=1}^m \|\m{z}_k^d\|\left(\beta_7\frac{\|\bm{\eta}_k\|^2}{\|\bm{\eta}_k\|+\varepsilon}-\|\bm{\eta_k}\|\|\m{f}_k\|\right) \nonumber\\
&\leq-\sum_{k=1}^m \|\m{z}_k^d\|\left(k_p\|\bm{\eta}_k\|^2 + \frac{\alpha}{2}(\gamma_k-\beta_7)^2-\frac{\alpha}{2}\beta_7^2 \right) \nonumber\\
&\qquad\qquad - \sum_{k=1}^m \beta_7 \|\m{z}_k^d\| \|\bm{\eta}_k\| \left(\frac{\|\bm{\eta}_k\|}{\|\bm{\eta}_k\|+\varepsilon} -1 \right) \label{eq:dotV_practice1}
\end{align}
In the last term in the right-hand side of Eq.~\eqref{eq:dotV_practice1}, by using  $\frac{\|\bm{\eta}_k\|}{\|\bm{\eta}_k\| + \varepsilon}< 1$, we have
\begin{align}
\dot{V} & -k_p\sum_{k=1}^m\|\m{z}_k^d\|\|\bm{\eta}_k\|^2 -\frac{\alpha}{2}\sum_{k=1}^m\|\m{z}_k^d\|(\gamma_k-\beta_7)^2 \nonumber\\
&\qquad\qquad + \underbrace{\frac{\alpha}{2}\beta_7^2\sum_{k=1}^m\|\m{z}_k^d\| + \varepsilon\beta_7\sum_{k=1}^m \|\m{z}_k^d\|}_{\triangleq \Delta}. \label{eq:dotV_practice}
\end{align}
With a similar argument as in the proof of Theorem~\ref{thm:2}, we can derive a bound on the first term in the right-hand side of \eqref{eq:dotV_practice} and eventually obtain
\begin{align}
\dot{V} \leq -\zeta V + \Delta,
\end{align}
where $\zeta\triangleq \min\{k_p\lambda_{d+2}(\m{L}_{\rm b}^*)\frac{\min_{k=1,\ldots,m}\|\m{z}_k^d\|}{2(n-1)s^2},k\alpha\}$. It follows from comparison lemma \cite{Khalil2002nonlinear} that for an arbitrarily small $\epsilon>0$, there exists a finite time $T_0\geq 0$ such that $V(t)\leq \frac{\Delta}{\zeta}+\epsilon,\; \forall t\geq T_0$. This implies that $\bm{\delta}$ and $\bm{\gamma}$ are both uniformly ultimately bounded, with the ultimate bound $\sqrt{2(\zeta^{-1}\Delta+\epsilon)}$ independent of the initial condition.

\end{document}